\documentclass[10pt,journal]{IEEEtran}

\usepackage{amsmath,amssymb,amsthm}
\usepackage{mathtools}
\usepackage{booktabs}
\usepackage{array}
\usepackage{tikz}
\usetikzlibrary{positioning,arrows.meta,calc,fit,backgrounds,shapes.geometric}
\usepackage{algorithm}
\usepackage{algpseudocode}
\usepackage{enumitem}
\usepackage{float}
\usepackage{graphicx}
\usepackage[colorlinks=true,allcolors=blue]{hyperref}

\theoremstyle{plain}
\newtheorem{theorem}{Theorem}
\newtheorem{lemma}{Lemma}
\newtheorem{proposition}{Proposition}
\newtheorem{corollary}{Corollary}
\theoremstyle{definition}
\newtheorem{definition}{Definition}
\newtheorem{example}{Example}
\newtheorem{remark}{Remark}

\newcommand{\F}{\mathbb{F}}
\newcommand{\R}{R}
\newcommand{\wt}{\operatorname{wt}}
\newcommand{\supp}{\operatorname{supp}}
\newcommand{\ann}{\operatorname{ann}}
\newcommand{\HX}{H_X}
\newcommand{\HZ}{H_Z}
\newcommand{\code}[3]{[\![#1,#2,#3]\!]}
\newcommand{\cln}{\,{:}\,}

\begin{document}

\title{Logical Operator Decomposition for Distance\\ Analysis of Bivariate Bicycle Codes}

\author{Mohammad Rowshan and Simon Devitt%
\thanks{M. Rowshan and S. Devitt are with the Centre for Quantum Software and Information, School of Computer Science, University of Technology Sydney, NSW 2007, Australia (e-mail: mohammad.rowshan@uts.edu.au, simon.devitt@uts.edu.au).}}

\markboth{Rowshan and Devitt: Logical Operator Decomposition for Distance Analysis of Bivariate Bicycle Codes}%
{Rowshan and Devitt: Logical Operator Decomposition for Distance Analysis of Bivariate Bicycle Codes}

\maketitle

\begin{abstract}
Bivariate bicycle (BB) quantum codes are a prominent finite-length family of quantum low-density parity-check codes, but their minimum distance is usually established numerically rather than read from the defining polynomials. We study the $Z$-logical quotient $K/S$ over $\mathbb F_2[x,y]/(x^\ell-1,y^m-1)$ and show that it fits into a short exact sequence with an annihilator quotient as kernel and a colon quotient as cokernel. The sequence gives an explicit logical basis, a dimension formula, and a componentwise distance identity $d_Z=\min(d_{\mathcal A},d_{\mathcal C})$. Using the Frobenius structure of the finite group algebra, we prove $r_{\mathcal A}=r_{\mathcal C}=k/2$ for every BB code, including repeated-root cases. The algebraic component of a logical class is distinct from the support shape of its lightest representatives: an annihilator class can have a lighter two-block representative, and a colon class can have a one-sided minimum. For lower bounds we show that every proper subset of a minimum-weight logical operator has nonzero syndrome, and that this property persists inside the colon component but not inside the annihilator component. A translation-anchored cluster search built on it proves the distances $4,6,10,10,12,18$ of the six standard BB codes of lengths $18$ to $288$ and enumerates every minimum-weight logical operator. The resulting census shows that $\code{108}{8}{10}$ is the only one of the six whose distance is attained in a single component, with $d_{\mathcal C}=10$ and $d_{\mathcal A}=12$.
\end{abstract}

\begin{IEEEkeywords}
Quantum LDPC codes, bivariate bicycle codes, minimum distance, syzygy module, annihilator, colon ideal, abelian codes.
\end{IEEEkeywords}

\IEEEpeerreviewmaketitle

\section{Introduction}\label{sec:intro}
\IEEEPARstart{Q}{uantum} error correction protects logical information by spreading it across many physical qubits, and the number of errors a code tolerates is controlled by its minimum distance. The surface code \cite{kitaev2003,fowler2012} reads its distance directly from a planar lattice but encodes at a vanishing rate. Quantum low-density parity-check (LDPC) codes keep the parity checks sparse while raising the rate, and a sequence of constructions has shaped the area: hypergraph product codes \cite{tillich2014}, lifted and balanced product codes with strong finite-length behaviour \cite{panteleev2021,breuckmann2021bp}, asymptotically good quantum LDPC codes \cite{panteleev2022}, and quantum Tanner codes \cite{leverrier2022}. A broad survey is given in \cite{breuckmann2021}.

Bivariate bicycle (BB) codes are prominent finite-length representatives of this line of work. They descend from the bicycle construction of MacKay, Mitchison, and McFadden \cite{mackay2004} and were brought to prominence by Bravyi \emph{et al.} \cite{bravyi2024}, who demonstrated weight-six checks, a layout friendly to a planar architecture, and circuit-level performance competitive with surface-code memory at substantially higher rate. They are a special case of two-block group algebra codes \cite{linpryadko2024}, and their dimension and existence have a clean algebraic description through the defining-polynomial structure \cite{postema2026}. The Gross code $\code{144}{12}{12}$ has become a reference finite-length example because it encodes twelve logical qubits with weight-six checks at substantially higher rate than a comparable surface-code memory \cite{bravyi2024}.

For a BB code, the dimension and the check weight follow from the two defining polynomials by inspection, but the distance does not. In practice, it is established after the fact, by numerical or search-based methods: mixed-integer programming as used by Bravyi \emph{et al.} \cite{bravyi2024} (following integer-program methods such as \cite{landahl2011}), the randomized window method of QDistRnd \cite{qdistrnd2022}, or decoder-assisted sampling \cite{roffe2020}. None of these methods reads the distance from the algebraic structure of $(a,b)$ directly; each treats the code as a fixed object to be probed once it already exists. Finite-length code design therefore tends to build first and measure the distance afterwards, rather than the reverse.

Although the terminology below is drawn from commutative algebra, every object in this paper is a finite-dimensional binary vector space and every computation reduces to Gaussian elimination over $\F_2$.

\subsection{Related work}
The algebraic structure of BB and two-block group algebra codes is well understood at the level of dimension and existence. Lin and Pryadko \cite{linpryadko2024} and Postema and Kokkelmans \cite{postema2026} give the dimension through a greatest common divisor of the defining polynomials, equivalently through the common-zero count of Proposition~\ref{prop:spec}. Eberhardt and Steffan construct explicit logical bases and fold-transversal Clifford gates for BB codes \cite{eberhardt2024}, and structure-driven subfamilies include the coprime codes of Wang and Mueller \cite{wang2026} and the self-dual codes of Liang and Chen \cite{liang2025}. These results describe the code or a convenient logical basis; they do not separate logical classes by how their low-weight representatives arise.

Distance is usually settled numerically. Bravyi \emph{et al.}\ obtained the distances of the standard codes by mixed-integer programming, reporting the length-$288$ value as an upper bound \cite{bravyi2024}; QDistRnd \cite{qdistrnd2022} and decoder-assisted sampling \cite{roffe2020} give upper-bound witnesses. For sparse codes, Dumer, Kovalev, and Pryadko showed that minimum-weight codewords can be grown as connected clusters in the Tanner graph, which makes distance verification exponential in $d$ but nearly independent of $n$ \cite{dumer2017}. Classical defining-set and apparent-distance bounds for abelian codes \cite{bernal2016,macwilliams1977} apply to a single ideal rather than to the pair $(a,b)$. The exact sequence used here is the standard annihilator--colon decomposition of a relation module \cite{eisenbud1995}, specialised to the BB logical quotient; its new use is to label logical classes and to split the distance problem, and Section~\ref{sec:cluster} shows that the splitting interacts nontrivially with cluster verification.

\subsection{Contributions}
The paper has one structural result and two computational consequences. We state them here in the order in which they are used.

\paragraph{Logical quotient.}
Write a $Z$-type operator as a pair $(u,v)\in\R^2$, one polynomial for each qubit block. The logical classes are $K/S$, where $K=\{(u,v):au+bv=0\}$ and $S=\{(br,ar):r\in\R\}$. Theorem~\ref{thm:exact} identifies the kernel and quotient of the map that retains the right block modulo $(a)$:
\begin{equation}
0\longrightarrow
\ann(a)/b\,\ann(a)
\xrightarrow{\ \iota\ }
K/S
\xrightarrow{\ \pi\ }
(a\cln b)/(a)
\longrightarrow0 .
\end{equation}
Here $t\in\ann(a)$ denotes a polynomial satisfying $at=0$, $\iota(t+b\ann(a))=[(t,0)]$, and $\pi[(u,v)]=v+(a)$. Thus the kernel of $\pi$ consists exactly of classes that \emph{admit} a left-only representative. This does not say that their minimum-weight representative is left-only. Likewise, $\pi[(u,v)]\ne0$ labels a colon class but does not force every representative to occupy both blocks. Keeping this algebraic label separate from representative shape is important in the examples below.
Figure~\ref{fig:seq-interp} gives a visual summary of this kernel/quotient view.

\begin{figure}[!t]
\centering
\begin{tikzpicture}[font=\small, node distance=5mm,
  bx/.style={draw, rounded corners, align=center, inner sep=3pt, text width=32mm},
  q/.style={bx, fill=gray!5},
  comp/.style={bx, fill=blue!6, draw=blue!45},
  ar/.style={-{Latex}, gray!70, semithick}]

 \node[bx, text width=42mm] (op) {a logical class\\$[(u,v)]\in K/S$};
 \node[q, below=6mm of op, xshift=-19mm] (q1) {left-only representative?\\ $[(t,0)]$, $t\in\ann(a)$};
 \node[q, below=6mm of op, xshift=19mm]  (q2) {residual right-block\\information $v+(a)$};

 \node[comp, below=of q1] (c1) {annihilator component\\ $\ann(a)/b\,\ann(a)$};
 \node[comp, below=of q2] (c2) {colon component\\ $(a\cln b)/(a)$};

 \draw[ar] (op) -- (q1);
 \draw[ar] (op) -- (q2);
 \draw[ar] (q1) -- (c1);
 \draw[ar] (q2) -- (c2);

 \path (c1) -- (c2) coordinate[midway] (mid);
 \node[font=\footnotesize, black!75, below=4.8mm of mid, align=center, text width=70mm]
   {the exact sequence of Theorem~\ref{thm:exact} guarantees that these two pieces account for all of $K/S$};
\end{tikzpicture}
\caption{Visual reading of the annihilator--colon decomposition. The left branch identifies classes that admit a left-only representative, while the right branch records the residual right-block information measured by $\pi$. The exact sequence relates these two pieces without introducing a third algebraic component.}
\label{fig:seq-interp}
\end{figure}

The short exact sequence immediately gives
\begin{align}
 k&=r_{\mathcal A}+r_{\mathcal C},\nonumber\\
 r_{\mathcal A}&=\dim\ann(a)-\dim b\ann(a),\nonumber\\
 r_{\mathcal C}&=\dim(a\cln b)-\dim(a).
\end{align}
We then use the Frobenius pairing of the finite group algebra to prove the stronger identity
\begin{equation}
 r_{\mathcal A}=r_{\mathcal C}=k/2
\end{equation}
for every BB code, without a semisimplicity assumption (Lemma~\ref{lem:frob}). In the semisimple case, Proposition~\ref{prop:spec} recovers the same equality character by character from the common zeros of $a$ and $b$.

\paragraph{Distance and representative shape.}
The exact sequence also partitions the distance calculation by the invariant $\pi$:
\begin{equation}
 d_Z=\min(d_{\mathcal A},d_{\mathcal C})
\end{equation}
(Theorem~\ref{thm:ddec}). The two minima are exact coset-leader problems. They are smaller and structurally labelled, but they are not replaced by Gaussian elimination: minimum weight remains the hard step. We therefore treat \emph{component} (annihilator or colon) and \emph{shape} (one-sided, lopsided, or balanced) as two independent descriptions. The small $\code{18}{4}{4}$ example shows why: $w_{\ann}=6$, yet $d_{\mathcal A}=4$ because an annihilator class has a lighter two-block representative. Conversely, the definitions allow a colon class to have a one-sided representative; Example~\ref{ex:72} gives an explicit instance.

\paragraph{Exact finite-length verification.}
Upper bounds come from explicit logical operators. For lower bounds we prove that every proper subset of a minimum-weight logical operator has nonzero $X$-syndrome (Lemma~\ref{lem:conn}), so minimum-weight logicals can be grown one check at a time from a single anchor qubit fixed by translation symmetry (Theorem~\ref{thm:cluster}). The same property holds for minimum-weight elements of the colon component but fails in general for the annihilator component (Corollary~\ref{cor:colonconn}), which gives $d_{\mathcal C}$ exactly by the same search. On the six standard examples the search proves $d=4,6,10,10,12,18$ for lengths $18$ to $288$, confirming the length-$288$ value reported as a numerical upper bound in \cite{bravyi2024}. Run past the first hit, it enumerates all minimum-weight logicals, and the resulting census (Table~\ref{tab:census}) determines the component and shape of every minimum-weight class. In particular $\code{108}{8}{10}$ has $d_{\mathcal C}=10<d_{\mathcal A}=12$, and all its minimum-weight classes are balanced.

The remainder is organised accordingly. Section~\ref{sec:prelim} gives the ring--matrix dictionary, Section~\ref{sec:decomp} proves the structural results, Section~\ref{sec:tax} separates algebraic component from representative shape, Section~\ref{sec:lb} develops the exact lower bounds, and Section~\ref{sec:results} reports the six finite-length examples, the minimum-weight census, and the design diagnostics.

\section{Preliminaries}\label{sec:prelim}

\begin{table}[!t]
\centering\small
\caption{Notation, its matrix form, and what each object means for the code. $N=\ell m$, $n=2N$.}
\label{tab:notation}
\renewcommand{\arraystretch}{1.25}
\begin{tabular}{@{}>{\raggedright\arraybackslash}p{1.15cm}>{\raggedright\arraybackslash}p{2.0cm}>{\raggedright\arraybackslash}p{4.1cm}@{}}
\toprule
symbol & matrix form & meaning for the code \\
\midrule
$\R$ & $\F_2^N$ & one block of $N$ qubits on the torus \\
$a,b$ & $A,B$ & the two check patterns \\
$\HX$ & $[\,A\mid B\,]$ & $X$-check matrix \\
$K$ & $\ker\HX$ & all $Z$-operators commuting with the $X$-checks \\
$S$ & $\operatorname{im}\HZ^{\mathsf T}$ & $Z$-stabilizers, i.e.\ trivial operators \\
$K/S$ & $\ker\HX/S$ & nontrivial $Z$-logical operators \\
$\ann(a)$ & $\ker A$ & operators living on one block alone \\
$(a\cln b)$ & $B^{-1}(\operatorname{im}A)$ & right-block words the left block can cancel \\
$d_{\mathcal A}$ & --- & minimum weight among classes with $\pi=0$ \\
$d_{\mathcal C}$ & --- & minimum weight among classes with $\pi\ne0$ \\
\bottomrule
\end{tabular}
\end{table}

\noindent Table~\ref{tab:notation} collects the notation used throughout, together with the matrix form of each object and its meaning for the code.

\subsection{Notation and the ambient ring}
We work over the binary field $\F=\F_2$. Fix integers $\ell,m\ge1$ and let
\begin{equation}
\R=\F[x,y]/(x^\ell-1,\;y^m-1)
\end{equation}
be the group algebra of $\mathbb Z_\ell\times\mathbb Z_m$ over $\F$. An element $f\in\R$ is a binary bivariate polynomial with $x$-exponents read modulo $\ell$ and $y$-exponents modulo $m$; the monomials $\{x^iy^j:0\le i<\ell,\,0\le j<m\}$ form a basis, so $\dim_\F\R=\ell m$. Multiplication is cyclic convolution in both indices, and over $\F_2$ squaring is additive, $(f+g)^2=f^2+g^2$. We write $\supp(f)$ for the monomials appearing in $f$ and $\wt(f)=|\supp(f)|$ for its Hamming weight. Multiplication by $f$ is an $\F$-linear map of $\R$; in the monomial basis it is the $\ell m\times\ell m$ binary matrix $F$, a sum of $\wt(f)$ permutation matrices, and these matrices commute because $\R$ is commutative. Every object in this paper is therefore a subspace of a finite $\F_2$-space and is computed by Gaussian elimination.

\subsection{Bivariate bicycle codes and the logical quotient}\label{sec:bbcodes}
Throughout the paper we use $\code{18}{4}{4}$ as a running example: it is small enough that every step, from the parity-check rows to the final distance proof, can be written out and checked by hand. Fix two polynomials $a,b\in\R$, the code's defining checks. The bivariate bicycle code of $(a,b)$ is the Calderbank--Shor--Steane code \cite{calderbank1996,steane1996} with parity-check matrices
\begin{equation}
\HX=[\,A\mid B\,],\qquad \HZ=[\,B^{\mathsf T}\mid A^{\mathsf T}\,],
\end{equation}
on $n=2\ell m$ qubits. A length-$n$ vector splits into a left and a right block, each a copy of $\R$, and a single $X$-check reads $au+bv$ (Fig.~\ref{fig:struct}). Each check has weight $\wt(a)+\wt(b)$, equal to six for the examples considered here. Commutativity of $\R$ gives $\HX\HZ^{\mathsf T}=AB+BA=0$, so the code is well defined.

\begin{figure}[!t]
\centering
\begin{tikzpicture}[scale=0.85,
  q/.style={circle,draw,minimum size=3mm,inner sep=0pt},
  cc/.style={rectangle,draw,fill=black!10,minimum size=3.2mm,inner sep=0pt}]
 \node at (1.05,1.75){\scriptsize left block $\cong\R$ : component $u$};
 \foreach \i in {0,1,2,3}{\node[q] (L\i) at (\i*0.7,1.1){};}
 \node at (5.0,1.75){\scriptsize right block $\cong\R$ : component $v$};
 \foreach \i in {0,1,2,3}{\node[q] (R\i) at (4.0+\i*0.7,1.1){};}
 \node[cc] (chk) at (2.45,-0.15){};
 \node at (2.45,-0.55){\scriptsize one $X$-check reads $au+bv$};
 \foreach \i in {0,1,2}{\draw (chk)--(L\i);}
 \foreach \i in {1,2,3}{\draw[densely dashed] (chk)--(R\i);}
 \node at (0.00,0.55){\scriptsize multiply by $a$};
 \node at (5.9,0.55){\scriptsize multiply by $b$};
\end{tikzpicture}
\caption{Block structure of a BB code. A $Z$-logical operator is a pair $(u,v)$ with $u$ on the left block and $v$ on the right. An $X$-check multiplies the left component by $a$ and the right by $b$ and sums the two.}
\label{fig:struct}
\end{figure}

\begin{example}[parity-check rows for $\code{18}{4}{4}$]\label{ex:hx}
On $\ell=m=3$ the index map $x^iy^j\mapsto 3i+j$ turns $\R$ into $\F_2^9$.
For this support picture we use the row-vector convention of Remark~\ref{rem:conv}: check at torus site $(i,j)$ involves left-block qubit $(i+p,j+q)$ for $(p,q)\in\supp(a)$ and right-block qubit $(i+p,j+q)$ for $(p,q)\in\supp(b)$. In the column-vector multiplication convention used for the algebra, the corresponding matrix rows carry the reciprocal supports; the two descriptions are related by the involution of Remark~\ref{rem:conv}.
Under this support convention, the displayed row for check $(0,0)$ is
\begin{equation}
 h_{(0,0)}=\bigl[\;\underbrace{0\;1\;0\;\;1\;1\;0\;\;0\;0\;0}_{\text{left: }\{1,3,4\}=\supp(a)}\;\big|\;
             \underbrace{0\;0\;1\;\;0\;0\;0\;\;1\;0\;1}_{\text{right: }\{11,15,17\}=\supp(b)+9}\;\bigr],
\end{equation}
coupling left-block qubits $\{y,x,xy\}$ (indices $\{1,3,4\}$, the support of $a$) and right-block qubits $\{y^2,x^2,x^2y^2\}$ (indices $\{11,15,17\}$, the support of $b$).

The matrix $H_X=[A\mid B]$ is \emph{block circulant with circulant blocks} (BCCB): check $(i,j)$ is the $2$-d torus-translate of check $(0,0)$ by the group element $(i,j)$.
This is not a simple cyclic shift of the length-$18$ row.
For example, check $(0,1)$ shifts every qubit index by $(0,1)$ on the torus, turning the left-block pattern $\{y,x,xy\}$ into $\{y^2,xy,xy^2\}=\{2,4,5\}$, which here happens to agree with the flat $+1$ shift. The two part company as soon as the $y$-index wraps: check $(0,2)$ sends $y\mapsto y^3=1$ and $xy\mapsto xy^3=x$, giving the pattern $\{1,x,xy^2\}=\{0,3,5\}$, whereas a flat $+2$ shift of $\{1,3,4\}$ would give $\{3,5,6\}$.
Fig.~\ref{fig:bccb} shows checks $(0,0)$ and $(1,0)$ side by side to illustrate.
\end{example}

\begin{figure}[!t]
\centering
\begin{tikzpicture}[scale=0.46,font=\scriptsize,
  dot/.style={circle,fill=black!80,minimum size=2.4mm,inner sep=0pt},
  sq/.style={fill=red!65,minimum size=2.6mm,inner sep=0pt},
  od/.style={circle,draw=gray!45,minimum size=2.4mm,inner sep=0pt},
  lb/.style={font=\scriptsize\bfseries},
  axl/.style={font=\tiny,gray!70}]

\begin{scope}[shift={(0,0)}]
  \draw[step=1,gray!30,thin] (0,0) grid (3,3);
  \foreach \i in {0,1,2}\foreach \j in {0,1,2}{\node[od] at (\i+.5,\j+.5){};}
  \node[dot] at (1.5,0.5){}; \node[dot] at (0.5,1.5){}; \node[dot] at (1.5,1.5){};
  \node[lb] at (1.5,3.7){left};
  \foreach \i/\t in {0/0,1/1,2/2}{\node[axl] at (\i+.5,-0.45){\t};}
  \foreach \j/\t in {0/0,1/1,2/2}{\node[axl] at (-0.45,\j+.5){\t};}
\end{scope}

\begin{scope}[shift={(3.9,0)}]
  \draw[step=1,gray!30,thin] (0,0) grid (3,3);
  \foreach \i in {0,1,2}\foreach \j in {0,1,2}{\node[od] at (\i+.5,\j+.5){};}
  \node[sq] at (2.5,0.5){}; \node[sq] at (0.5,2.5){}; \node[sq] at (2.5,2.5){};
  \node[lb] at (1.5,3.7){right};
  \foreach \i/\t in {0/0,1/1,2/2}{\node[axl] at (\i+.5,-0.45){\t};}
\end{scope}
\node[font=\footnotesize] at (3.45,-1.15){check $(0,0)$};

\draw[gray!55,dashed] (8.0,-1.4)--(8.0,4.2);

\draw[-{Latex},gray!70] (6.9,2.0) to[bend left=22] 
  node[above=2pt,font=\tiny,gray!70,fill=white,inner sep=1.5pt]{$+(1,0)$} (9.3,2.0);

\begin{scope}[shift={(9.5,0)}]
  \draw[step=1,gray!30,thin] (0,0) grid (3,3);
  \foreach \i in {0,1,2}\foreach \j in {0,1,2}{\node[od] at (\i+.5,\j+.5){};}
  \node[dot] at (2.5,0.5){}; \node[dot] at (1.5,1.5){}; \node[dot] at (2.5,1.5){};
  \node[lb] at (1.5,3.7){left};
  \foreach \i/\t in {0/0,1/1,2/2}{\node[axl] at (\i+.5,-0.45){\t};}
  \foreach \j/\t in {0/0,1/1,2/2}{\node[axl] at (-0.45,\j+.5){\t};}
\end{scope}

\begin{scope}[shift={(13.4,0)}]
  \draw[step=1,gray!30,thin] (0,0) grid (3,3);
  \foreach \i in {0,1,2}\foreach \j in {0,1,2}{\node[od] at (\i+.5,\j+.5){};}
  \node[sq] at (0.5,0.5){}; \node[sq] at (1.5,2.5){}; \node[sq] at (0.5,2.5){};
  \node[lb] at (1.5,3.7){right};
  \foreach \i/\t in {0/0,1/1,2/2}{\node[axl] at (\i+.5,-0.45){\t};}
\end{scope}
\node[font=\footnotesize] at (12.95,-1.15){check $(1,0)$};

\node[font=\tiny] at (8.0,-2.0){$\bullet=g{+}\supp(a)$ (left)\quad $\blacksquare=g{+}\supp(b)$ (right); axes are $(i,j)\in\mathbb Z_3{\times}\mathbb Z_3$};
\end{tikzpicture}
\caption{Checks $(0,0)$ and $(1,0)$ for $\code{18}{4}{4}$. Check $g$ involves left-block qubits at $g{+}\supp(a)$ ($\bullet$) and right-block qubits at $g{+}\supp(b)$ ($\blacksquare$), indices reduced on $\mathbb{Z}_3{\times}\mathbb{Z}_3$. The pattern shifts by a torus translation, not a flat row shift; this is the BCCB structure. The support picture uses the reciprocal row convention of Remark~\ref{rem:conv}.}
\label{fig:bccb}
\end{figure}

The $Z$-logical operators are the vectors of $\ker\HX$ that are not $Z$-stabilizers. In polynomial form,
\begin{equation}
K=\{(u,v)\in\R^2:au+bv=0\},\quad S=\{(br,ar):r\in\R\},
\end{equation}
so $K$ is the syzygy module of $(a,b)$, $S$ the stabilizer submodule, and the logical operators form the quotient $K/S$ of dimension $k$. In coding terms: $K$ collects the $Z$-type operators that commute with every $X$-check, $S$ collects those that are products of stabilizers and therefore act trivially, and $K/S$ is what remains, the nontrivial $Z$-logical operators, with two operators identified when they differ by a stabilizer. The minimum distance is the lightest nonzero element of that quotient, which is why the paper works with $K/S$ rather than with $K$ alone.

\begin{remark}[stabilizer convention]\label{rem:conv}
We treat codewords as column vectors, so the $Z$-stabilizers are the columns of $\HZ^{\mathsf T}=[\,B;A\,]$, namely $(Br,Ar)=(br,ar)$, giving $S=\{(br,ar):r\in\R\}$. Under the alternative row-vector convention the stabilizers are $(\bar b r,\bar a r)$ with the reciprocals $\bar a(x,y)=a(x^{-1},y^{-1})$, $\bar b(x,y)=b(x^{-1},y^{-1})$. Since $x\mapsto x^{-1},\,y\mapsto y^{-1}$ is a ring automorphism of $\R$, it carries annihilators to annihilators and colon ideals to colon ideals, so the decomposition below is unchanged; only the labels are reciprocated.
\end{remark}

The minimum $Z$-distance is
\begin{equation}\label{eq:dz}
d_Z=\min\{\wt(u)+\wt(v):(u,v)\in K\setminus S\},
\end{equation}
and the code distance is $d=\min(d_X,d_Z)$, with $d_X$ given by the same expressions after exchanging the roles of the two blocks. The number of logical qubits follows from $\ell m$ and $\operatorname{rank}\HX$, and in closed form from a gcd when $\gcd(\ell,m)=1$ \cite{postema2026}; we use the rank, which needs no coprimality.

\subsection{Algebraic and matrix dictionary}\label{sec:matrix}
The decomposition in Section~\ref{sec:decomp} is built from a few standard constructions in $\R$ \cite{macwilliams1977,coxlittleoshea}, which we fix here together with their coordinate form; each is a finite-dimensional $\F_2$-vector space computed by Gaussian elimination. Writing $N=\ell m$ and identifying $\R$ with $\F_2^N$ by monomial coefficients, multiplication by $f$ is the matrix $F$, so $a,b$ become $A,B$. The \emph{principal ideal} $(a)=\{ar:r\in\R\}$ is the column space $\operatorname{im}A$, equivalently the abelian code generated by $a$; the membership test $bv\in(a)$ separates the two families of logical operators below. The \emph{annihilator} $\ann(a)=\{r\in\R:ar=0\}=\ker A$ is nonzero exactly when $a$ is a zero divisor in $\R$, which is common when $\ell$ or $m$ is even (then $x^\ell-1$ or $y^m-1$ is not squarefree over $\F_2$) but also occurs in the semisimple case when $a$ vanishes on a nontrivial set of characters. As a repeated-root example, $(1+x)^2=0$ in $\F_2[x]/(x^2-1)$. An element $u\in\ann(a)$ gives a one-sided logical $(u,0)$. The \emph{colon ideal}
\begin{equation}\label{eq:colon}
(a\cln b)=\{v\in\R:bv\in(a)\}=B^{-1}(\operatorname{im}A)
\end{equation}
collects the right-block words $v$ whose contribution $bv$ is cancellable by a left-block word, $au=bv$; it contains both $(a)$ and $\ann(b)$. Finally, the syzygy and stabilizer modules are
\begin{equation}\label{eq:KSmat}
K=\ker[\,A\mid B\,]=\ker\HX,\qquad S=\operatorname{im}\begin{bmatrix}B\\ A\end{bmatrix}=\operatorname{im}\HZ^{\mathsf T},
\end{equation}
so the $Z$-logical operators are the quotient $K/S=\ker\HX/\operatorname{im}\HZ^{\mathsf T}$ of dimension $k=\dim\ker\HX-\operatorname{rank}\HZ$. In computation the colon ideal is realized as $B^{-1}(\operatorname{im}A)=\ker(C_AB)$, where $C_A$ is any parity-check matrix for the code $(a)$, that is $\ker C_A=\operatorname{im}A$. Table~\ref{tab:glance} pairs each object with its coordinate form; the distance quantities built from them appear with the distance theorem in Section~\ref{sec:decomp}.

\begin{table}[!t]
\centering\small
\caption{Each object in its algebraic and coordinate forms, under $\R\cong\F_2^N$ ($N=\ell m$) with $f\leftrightarrow F$.}
\label{tab:glance}
\renewcommand{\arraystretch}{1.25}
\begin{tabular}{@{}>{\raggedright\arraybackslash}p{1.4cm}>{\raggedright\arraybackslash}p{3.35cm}>{\raggedright\arraybackslash}p{2.5cm}@{}}
\toprule
object & algebraic meaning & matrix form \\
\midrule
$(a)$ & multiples of $a$ & $\operatorname{im}A$ \\
$\ann(a)$ & $ar=0$ & $\ker A$ \\
$b\,\ann(a)$ & one-sided reps that are stabilizers & $B(\ker A)$ \\
$(a\cln b)$ & $bv\in(a)$ & $B^{-1}(\operatorname{im}A)$ \\
$K$ & syzygies $au+bv=0$ & $\ker[\,A\mid B\,]$ \\
$S$ & $Z$-stabilizers & $\operatorname{im}[\,B;A\,]$ \\
$K/S$ & logical classes, $\dim=k$ & $\ker[\,A\mid B\,]/\operatorname{im}[\,B;A\,]$ \\
\bottomrule
\end{tabular}
\end{table}

\section{Decomposition of the logical quotient}\label{sec:decomp}
The central structural result is a short exact sequence for $K/S$. Its kernel is the annihilator-derived subspace; its quotient is the colon quotient. The projection $\pi$ determines whether a logical class lies in the annihilator component or has a nonzero image in the colon quotient, with no direct-sum splitting required and no BB code excluded.

\begin{theorem}[annihilator--colon exact sequence]\label{thm:exact}
With $K,S$ as above and $(a\cln b)=\{v:bv\in(a)\}$, there is a short exact sequence of $\F_2$-vector spaces
\begin{equation}\label{eq:ses}
0\longrightarrow \ann(a)/b\,\ann(a)\xrightarrow{\ \iota\ } K/S \xrightarrow{\ \pi\ } (a\cln b)/(a)\longrightarrow0,
\end{equation}
where $\iota(t+b\ann(a))=[(t,0)]$ and $\pi[(u,v)]=v+(a)$. Hence a logical class lies in the annihilator component precisely when $\pi=0$, equivalently when it admits a representative $(t,0)$ with $at=0$; otherwise it lies in the colon component. These are labels of logical classes under the chosen projection, not labels of the support shape of their minimum-weight representatives.
\end{theorem}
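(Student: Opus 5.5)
The plan is to verify the exact sequence directly, one position at a time: well-definedness of both maps, injectivity of $\iota$, exactness in the middle, and surjectivity of $\pi$. Everything uses only that $\R$ is commutative and that $K,S$ are as defined in Section~\ref{sec:bbcodes}. The closing sentences of Theorem~\ref{thm:exact} then follow from $\ker\pi=\operatorname{im}\iota$.

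\emph{Well-definedness.} We first check that $\pi$ lands in and is well defined on the stated spaces.
\begin{itemize}[leftmargin=*]
\item If $(u,v)\in K$, then $bv=-au=au\in(a)$, so $v\in(a\cln b)$.
\item If $(u,v)\in S$, say $(u,v)=(br,ar)$, then $v=ar\in(a)$. Hence $\pi$ is well defined on $K/S$ and is $\F_2$-linear.
\end{itemize}
Next we check $\iota$. For $t\in\ann(a)$ the pair $(t,0)$ satisfies $at+b\cdot0=0$, so it lies in $K$. If $t=bs$ with $s\in\ann(a)$, then $(t,0)=(bs,as)\in S$, since $as=0$. So $\iota$ is well defined.

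\emph{Injectivity of $\iota$.} Suppose $(t,0)\in S$, so $(t,0)=(br,ar)$ for some $r$. Then $ar=0$, i.e.\ $r\in\ann(a)$, and $t=br\in b\,\ann(a)$. Thus the class of $t$ is zero.

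\emph{Exactness in the middle.}
\begin{itemize}[leftmargin=*]
\item The composite $\pi\iota$ sends $t$ to $0+(a)=0$, so $\operatorname{im}\iota\subseteq\ker\pi$.
\item Conversely, suppose $[(u,v)]\in\ker\pi$, so $v=ar$ for some $r$. Subtract the stabilizer $(br,ar)$ to get the representative $(u-br,0)$ of the same class. This pair still lies in $K$, so $a(u-br)=0$. Setting $t=u-br\in\ann(a)$ gives $[(u,v)]=\iota(t+b\,\ann(a))$.
\end{itemize}
This also yields the "equivalently" clause: $\pi=0$ exactly when the class has a representative $(t,0)$ with $at=0$. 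Any such representative has $\pi$-image $0$.

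\emph{Surjectivity of $\pi$.} Given $v\in(a\cln b)$, by definition $bv=au$ for some $u\in\R$. Over $\F_2$ this means $au+bv=0$, so $(u,v)\in K$ and $\pi[(u,v)]=v+(a)$.

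\emph{Where care is needed.} Nothing here is deep. The only step needing attention is the middle exactness, specifically the choice to subtract the stabilizer $(br,ar)$ determined by the witness $r$ of $v\in(a)$. The remaining concern is checking characteristic-two sign conventions, which are harmless. The final sentence of the theorem, that these are labels of classes rather than of minimum-weight supports, is interpretive and needs no proof beyond noting that $\pi$ depends only on the class.
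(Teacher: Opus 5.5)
Your proposal is correct and follows the paper's proof essentially step for step: well-definedness and surjectivity of $\pi$ via $bv=au$, the kernel identified by subtracting the stabilizer $(br,ar)$ determined by $v=ar$, and injectivity of $\iota$ from $(t,0)\in S\iff t\in b\,\ann(a)$. The only difference is that you check separately that $\iota$ is well defined and that $\pi\iota=0$, whereas the paper folds both into that single equivalence.
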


\begin{proof}
Set $\pi[(u,v)]=v+(a)$. This is well defined: if $au+bv=0$ then $bv=au\in(a)$, so $v\in(a\cln b)$, and shifting $(u,v)$ by a stabilizer $(br,ar)$ changes $v$ only by the element $ar\in(a)$, leaving $v+(a)$ unchanged. The map is onto $(a\cln b)/(a)$, since for any $v$ with $bv=au$ the pair $(u,v)$ lies in $K$ and maps to $v+(a)$.

It remains to identify the kernel. Suppose $\pi[(u,v)]=0$, so $v=ar$ for some $r$. Then $au+bv=0$ becomes $a(u+br)=0$, so $t:=u+br$ lies in $\ann(a)$, and subtracting the stabilizer $(br,ar)$ shows that $(u,v)$ is represented by $(t,0)$. Conversely, $(t,0)$ is itself a stabilizer precisely when $t\in b\,\ann(a)$. The same equivalence gives injectivity of $\iota$ for free: $\iota(t+b\,\ann(a))=0$ in $K/S$ means $(t,0)\in S$, i.e.\ $t\in b\,\ann(a)$, i.e.\ the class $t+b\,\ann(a)$ was already zero. Hence $\ker\pi=\operatorname{im}\iota$, and the sequence is exact.
\end{proof}

In the dictionary of Section~\ref{sec:matrix}, the sequence \eqref{eq:ses} reads
\begin{multline}\label{eq:sesmat}
0\to\ker A/B(\ker A)\to\ker[\,A\mid B\,]\big/\operatorname{im}[\,B;A\,]\\
\to B^{-1}(\operatorname{im}A)/\operatorname{im}A\to0,
\end{multline}
so the annihilator part is the nullspace of $A$ modulo its image under $B$, and the colon part is the preimage of $\operatorname{im}A$ under $B$ modulo $\operatorname{im}A$. Every map is $\F_2$-linear and each quotient is computed by Gaussian elimination.

\begin{remark}[relation to standard constructions]\label{rem:standard}
The exact sequence \eqref{eq:ses} is a standard relation-module construction in commutative algebra~\cite{eisenbud1995}, specialized to the rank-one relation $au+bv=0$ over $\R$. The contribution here is not that construction itself but its interpretation for the BB logical quotient, and its use in separating the distance calculation into the classes introduced below.
\end{remark}

The map $\pi[(u,v)]=v+(a)$ is invariant under stabilizer shifts because a stabilizer changes $v$ by an element of $(a)$. Thus ``annihilator'' and ``colon'' refer to a class in $K/S$. A representative may nevertheless change shape after adding a stabilizer; Section~\ref{sec:tax} keeps these notions separate.

\subsection{Basis, dimension, and symmetry}
The sequence also gives a basis for $K/S$ and the corresponding dimension formula. There is an analogous sequence with $a$ and $b$ interchanged. The construction is equivariant under translation. None of this needs more than linear algebra over $\F_2$.

\begin{corollary}[logical basis and dimension]\label{cor:basis}
Let $t_1,\dots,t_{r_{\mathcal A}}$ lift a basis of $\ann(a)/b\,\ann(a)$, and let $v_1,\dots,v_{r_{\mathcal C}}$ lift a basis of $(a\cln b)/(a)$ with $u_j$ any solution of $au_j=bv_j$. Then
\begin{equation}
\{(t_i,0)\}_{i=1}^{r_{\mathcal A}}\ \cup\ \{(u_j,v_j)\}_{j=1}^{r_{\mathcal C}}
\end{equation}
is a basis of $K/S$, and
\begin{equation}\label{eq:dimk}
k=\bigl(\dim\ann(a)-\dim b\,\ann(a)\bigr)+\bigl(\dim(a\cln b)-\dim(a)\bigr).
\end{equation}
\end{corollary}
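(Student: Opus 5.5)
The plan is to read Corollary~\ref{cor:basis} off Theorem~\ref{thm:exact} by the standard fact that, in a short exact sequence of finite-dimensional $\F_2$-spaces $0\to U\xrightarrow{\iota}W\xrightarrow{\pi}Q\to0$, the image under $\iota$ of a basis of $U$, together with any lifts to $W$ of a basis of $Q$, is a basis of $W$. This gives $\dim W=\dim U+\dim Q$. Since $\ann(a)\supseteq b\,\ann(a)$ and $(a\cln b)\supseteq(a)$, the quotient dimensions are differences of subspace dimensions, and \eqref{eq:dimk} follows once we recall that $k=\dim K/S$.

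The key steps, in order, are these. First I check that the proposed elements are legitimate classes. Each $(t_i,0)$ lies in $K$ because $at_i=0$, and its class is $\iota(t_i+b\,\ann(a))$. For $(u_j,v_j)$, a solution $u_j$ of $au_j=bv_j$ exists because $v_j\in(a\cln b)$ means $bv_j\in(a)$. Then $au_j+bv_j=2bv_j=0$ over $\F_2$, so $(u_j,v_j)\in K$, and $\pi[(u_j,v_j)]=v_j+(a)$. The choice of $u_j$ is irrelevant, because changing it changes the class only by an element of $\ker\pi=\operatorname{im}\iota$.

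Second, I prove linear independence. Suppose $\sum_i\alpha_i[(t_i,0)]+\sum_j\beta_j[(u_j,v_j)]=0$. Applying $\pi$ kills the first sum and gives $\sum_j\beta_j(v_j+(a))=0$, so all $\beta_j=0$ because the $v_j+(a)$ form a basis. What remains is $\iota\bigl(\sum_i\alpha_i t_i+b\,\ann(a)\bigr)=0$, and injectivity of $\iota$ forces every $\alpha_i=0$.

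Third, I prove spanning. Take any class $w\in K/S$ and write $\pi(w)=\sum_j\beta_j(v_j+(a))$. Then $w-\sum_j\beta_j[(u_j,v_j)]$ lies in $\ker\pi=\operatorname{im}\iota$, so it is a combination of the $[(t_i,0)]$.

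There is no real obstacle here. The only point needing care is the well-definedness of the lifts $(u_j,v_j)$, namely the existence of $u_j$ and the independence of the class from the choice of $u_j$, which I handled in the first step. Everything else is the rank–nullity bookkeeping for exact sequences.
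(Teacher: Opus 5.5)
Your proposal is correct and follows the paper's proof. The paper invokes additivity of dimension in the short exact sequence of Theorem~\ref{thm:exact}, together with the standard fact that a kernel basis plus lifts of a quotient basis is a basis of the middle term. You prove that fact explicitly: you check that the lifts lie in $K$, obtain independence from $\pi$ and the injectivity of $\iota$, and obtain spanning from $\ker\pi=\operatorname{im}\iota$.
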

\begin{proof}
Dimension is additive in short exact sequences, which gives \eqref{eq:dimk} directly. A basis of the extension is obtained from a basis of the kernel together with lifts of a basis of the quotient; here the kernel basis is $(t_i,0)$ and the lifts are the colon witnesses $(u_j,v_j)$.
\end{proof}

\noindent\textbf{Matrix realisation.}  In the language of matrices (Section~\ref{sec:matrix}), the basis of Corollary~\ref{cor:basis} is computed as follows. The annihilator basis vectors $t_i$ are a basis of $\ker A$ reduced modulo $B(\ker A)$: one takes a basis of $\ker A$ by RREF on $A$, forms $B(\ker A)$ by applying $B$, and reduces again to extract $r_{\mathcal A}$ independent classes. The colon basis vectors $v_j$ come from $\ker(C_A B)$ (where $C_A$ is any parity-check matrix for $\operatorname{im}A$), reduced modulo $\operatorname{im}A$; each $u_j$ is a particular solution of $Au_j=Bv_j$, obtained by back-substitution. The entire construction requires only Gaussian elimination on matrices of size at most $N\times N$ with $N=\ell m$.

\begin{example}[exact-sequence trace for $\code{18}{4}{4}$]\label{ex:seq}
We trace each term of \eqref{eq:ses} for $\ell=m=3$, $a=x+y+xy$, $b=a^2$, by Gaussian elimination on $9\times9$ matrices.
\begin{enumerate}[leftmargin=1.4em,itemsep=2pt]
\item $\ann(a)=\ker A$ has dimension $2$.  Two independent annihilators are
\begin{align*}
t_1&=x+xy^2+x^2+x^2y+y+y^2,\\
t_2&=1+x^2+x^2y^2+xy+xy^2+y,
\end{align*}
both confirmed by $At_1=At_2=\mathbf{0}$.
\item $b\,\ann(a)=B(\ker A)$ has rank $0$: $Bt_1=Bt_2=\mathbf{0}$, so every annihilator maps to zero under $b$ and $\iota$ embeds $\ann(a)$ entirely into $K/S$.  Hence $r_{\mathcal A}=\dim\ann(a)-0=2$.
\item $\operatorname{rank}A=7$, so $\dim(a)=7$; the colon ideal $(a\cln b)$ equals all of $\R$ (dimension $9$), giving $r_{\mathcal C}=9-7=2$.
\end{enumerate}
Adding up, $k=r_{\mathcal A}+r_{\mathcal C}=4$. The annihilator quotient contributes two basis classes that admit one-sided representatives, while the colon quotient contributes two independent quotient coordinates. Minimum-weight representatives need not preserve those shapes; Example~\ref{ex:18} below gives the counterexample.
\end{example}
\noindent Identity \eqref{eq:dimk} is verified for every code in Table~\ref{tab:profile}. Lemma~\ref{lem:frob} shows that the equality $r_{\mathcal A}=r_{\mathcal C}=k/2$ is in fact general, not a peculiarity of the examples. This basis construction is independent of any distance search and gives an explicit algebraic generating set for all $Z$-logical classes; it complements logical-basis constructions aimed at transversal gates \cite{eberhardt2024} by organizing the basis according to the annihilator/colon classes.

\begin{lemma}[Frobenius balance]\label{lem:frob}
For every $a,b\in\R$,
\begin{equation}\label{eq:frobcolon}
(a\cln b)=\ann\!\bigl(b\,\ann(a)\bigr).
\end{equation}
Consequently
\begin{equation}\label{eq:balanced-dim}
r_{\mathcal C}=r_{\mathcal A},\qquad
k=2r_{\mathcal A}=2\bigl(\dim\ann(a)-\dim b\ann(a)\bigr).
\end{equation}
\end{lemma}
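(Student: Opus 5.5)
The plan is to use that $\R=\F_2[G]$, with $G=\mathbb Z_\ell\times\mathbb Z_m$, is a Frobenius algebra. The nondegenerate symmetric bilinear form is $\langle f,g\rangle=\epsilon(fg)$, where $\epsilon$ extracts the coefficient of the identity monomial $1$. Nondegeneracy is immediate: if $f\ne0$ has a monomial $x^iy^j$ in its support, then $\epsilon(f\,x^{-i}y^{-j})=1$. Since $\R$ is commutative, the form is associative, $\langle fg,h\rangle=\langle f,gh\rangle$. The key standard consequence is that for every ideal $I\subseteq\R$,
\begin{equation*}
I^{\perp}=\ann(I),\qquad \ann(\ann(I))=I,\qquad \dim\ann(I)=\ell m-\dim I.
\end{equation*}
I will prove this first. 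If $r\in I^\perp$, then for each $f\in I$ the element $rf$ satisfies $\epsilon(rf\,g)=\epsilon(r\,fg)=0$ for all $g$, because $fg\in I$. Nondegeneracy then forces $rf=0$. The reverse inclusion is obvious. The dimension formula follows from nondegeneracy, and the double annihilator follows from $I^{\perp\perp}=I$.

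For \eqref{eq:frobcolon} I will argue directly. First, $(a)=\ann(\ann(a))$ by the double-annihilator property. Then
\begin{equation*}
v\in(a\cln b)\iff bv\in\ann(\ann(a))\iff bv\,t=0\ \ \forall t\in\ann(a)\iff v\in\ann\bigl(b\,\ann(a)\bigr).
\end{equation*}
This step needs only commutativity.

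For the dimension count I will take dimensions on both sides of \eqref{eq:frobcolon}. This gives $\dim(a\cln b)=\ell m-\dim b\,\ann(a)$, using the fact that $b\,\ann(a)$ is an ideal. Likewise $\dim(a)=\ell m-\dim\ann(a)$, either from $(a)=\ann(\ann(a))$ or directly from rank--nullity on $A$. Subtracting,
\begin{equation*}
r_{\mathcal C}=\dim(a\cln b)-\dim(a)=\dim\ann(a)-\dim b\,\ann(a)=r_{\mathcal A}.
\end{equation*}
Combined with \eqref{eq:dimk} from Corollary~\ref{cor:basis}, this yields $k=2r_{\mathcal A}$.

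The only step of real substance is the Frobenius duality $I^\perp=\ann(I)$, and in particular $(a)=\ann(\ann(a))$ in the non-semisimple (repeated-root) case. Semisimplicity cannot be invoked there, so the argument must rest on nondegeneracy of $\epsilon$ and associativity of the form, as above. Everything else is bookkeeping with linear dimensions.
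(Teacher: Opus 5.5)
Your proposal is correct and follows essentially the same route as the paper. The paper also uses the identity-coefficient pairing to get $I^\perp=\ann(I)$ (your step $\epsilon(rfg)=0$ for all $g$ is its coefficient extraction with $g$ ranging over monomials $m_h^{-1}$). From this it deduces $(a)=\ann(\ann(a))$, the colon identity, and the dimension subtraction combined with Corollary~\ref{cor:basis}. Your explicit check that $b\,\ann(a)$ is an ideal, which is needed to apply $\dim\ann(I)=\ell m-\dim I$, is a detail the paper uses without stating it.
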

\begin{proof}
The finite group algebra $\R=\F_2[\mathbb Z_\ell\times\mathbb Z_m]$ is Frobenius. Use the nondegenerate pairing $\langle f,g\rangle=[1](fg)$, where $[1](fg)$ is the coefficient of the identity monomial. Let $I$ be an ideal and $g\in I^\perp$. For any $f\in I$ and any group monomial $m_h=x^iy^j$, the coefficient of $m_h$ in $gf$ satisfies
\[
 [m_h](gf)=\langle g,m_h^{-1}f\rangle=0,
\]
because $m_h^{-1}f\in I$. Thus $gf=0$ for every $f\in I$, so $g\in\ann(I)$. The converse follows immediately from the pairing, hence $\ann(I)=I^\perp$. Therefore $\dim\ann(I)=N-\dim I$ and $\ann(\ann(I))=I$.

Now $v\in(a\cln b)$ iff $bv\in(a)=\ann(\ann(a))$, which is equivalent to $(bv)t=0$ for every $t\in\ann(a)$, or $v\in\ann(b\ann(a))$. Therefore
\begin{align*}
\dim(a\cln b)&=N-\dim b\ann(a),\\
\dim(a)&=N-\dim\ann(a).
\end{align*}
Subtracting gives $r_{\mathcal C}=r_{\mathcal A}$. Corollary~\ref{cor:basis} then gives \eqref{eq:balanced-dim}.
\end{proof}

\begin{corollary}[symmetric sequence]\label{cor:sym}
Interchanging the two blocks gives the companion exact sequence
\begin{equation}
0\to\ann(b)/a\,\ann(b)\to K/S\to(b\cln a)/(b)\to0,
\end{equation}
by applying Theorem~\ref{thm:exact} with the roles of $a$ and $b$ exchanged. The two projections describe the same quotient in different coordinates: a class may be annihilator-type for one projection and colon-type for the other. Representative shape is a separate question.
\end{corollary}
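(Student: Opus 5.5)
The plan is to apply Theorem~\ref{thm:exact} verbatim to the pair $(b,a)$ after swapping the two qubit blocks, and then transport the result back to the original $K/S$ along the swap isomorphism. Define the block swap $\sigma:\R^2\to\R^2$ by $\sigma(u,v)=(v,u)$. Since $\R$ is commutative, $au+bv=0$ holds if and only if $bv+au=0$. Hence $\sigma$ carries $K=K(a,b)$ bijectively onto $K(b,a)=\{(u',v'):bu'+av'=0\}$.

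The stabilizer needs the same check. $\sigma$ sends $(br,ar)$ to $(ar,br)$, which is exactly the generic element of $S(b,a)=\{(a r, b r)\}$, because the stabilizer for the pair $(b,a)$ has the form $(\text{second}\cdot r,\ \text{first}\cdot r)$. Thus $\sigma(S)=S(b,a)$, and $\sigma$ descends to an $\F_2$-linear isomorphism $\bar\sigma:K/S\to K(b,a)/S(b,a)$.

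Next I apply Theorem~\ref{thm:exact} to $(b,a)$. This gives the exact sequence
\[
0\to\ann(b)/a\,\ann(b)\xrightarrow{\ \iota'\ }K(b,a)/S(b,a)\xrightarrow{\ \pi'\ }(b\cln a)/(b)\to0,
\]
with $\iota'(s)=[(s,0)]$ and $\pi'[(u',v')]=v'+(b)$.

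I then replace the middle term by $K/S$ via the isomorphism $\bar\sigma$. Composing $\bar\sigma^{-1}\circ\iota'$ and $\pi'\circ\bar\sigma$ keeps the sequence exact, because pre- and post-composing with isomorphisms preserves kernels and images. Unwinding the definitions, the new maps are $s+a\,\ann(b)\mapsto[(0,s)]$ and $[(u,v)]\mapsto u+(b)$. In words, the companion projection reads the \emph{left} block modulo $(b)$ instead of the right block modulo $(a)$.

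The remaining assertions are interpretive, and I would close with a remark rather than a proof. The class labels produced by $\pi$ and by the companion projection are different functions on the same space $K/S$. Nothing forces them to agree, and the paper's later examples can supply an instance where they do not. The statement that shape is separate is the same point already made after Theorem~\ref{thm:exact}.

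The only real obstacle is bookkeeping. One must confirm that the stabilizer convention $S=\{(br,ar)\}$ is symmetric under the swap with $a\leftrightarrow b$, so that Theorem~\ref{thm:exact} applies with no reciprocal twist. This holds by the computation above.
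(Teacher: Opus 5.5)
Your proposal is correct and follows the paper's route: apply Theorem~\ref{thm:exact} to the pair $(b,a)$ and read the maps back as $s+a\,\ann(b)\mapsto[(0,s)]$ and $[(u,v)]\mapsto u+(b)$, which are exactly the paper's $\iota'$ and $\pi'$. Your explicit check that the block swap carries $S$ onto the stabilizer module of $(b,a)$ spells out what the paper compresses into ``by the same argument''. For the ``may be'' clause, the $\code{72}{12}{6}$ logical $(0,(1+x^2)b)$ is a concrete instance: it has $\pi\ne0$, but $u=0\in(b)$, so it is annihilator-type for the companion projection.
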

\begin{proof}
Exchange $a$ and $b$ throughout. Then $H_X=[A\mid B]$ is symmetric in $A$ and $B$, and the relation $bu'+av'=0$ defines the right-block analogue of the syzygies. The maps $\iota'(s+a\,\ann(b))=[(0,s)]$ and $\pi'[(u,v)]=u+(b)$ satisfy the analogous exactness conditions by the same argument as Theorem~\ref{thm:exact}. A class lies in the annihilator family of the right projection iff $u\in(b)$; the two projections use different coordinate bases but identify the same quotient $K/S$.
\end{proof}
\begin{proposition}[translation equivariance]\label{prop:equiv}
All maps in \eqref{eq:ses} are $\R$-linear, hence equivariant under the translation group $G=\mathbb Z_\ell\times\mathbb Z_m$ acting by multiplication by monomials. Consequently $G$ acts on $K/S$, on $\ann(a)/b\,\ann(a)$, and on $(a\cln b)/(a)$, and the annihilator and colon families are unions of $G$-orbits.
\end{proposition}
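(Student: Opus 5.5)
The plan is to verify $\R$-linearity of every object and map in \eqref{eq:ses}; equivariance then follows immediately, because the translation by $g=(i,j)\in G$ is multiplication by the monomial $x^iy^j\in\R$.

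First I would check that each space is an $\R$-submodule or quotient module. The modules are $\R$ and $\R^2$ (with $\R$ acting diagonally), and the spaces are $K$, $S$, $\ann(a)$, $b\,\ann(a)$, $(a\cln b)$ and $(a)$. Commutativity of $\R$ gives each closure directly:
\begin{itemize}[leftmargin=1.4em,itemsep=1pt]
\item if $au+bv=0$ then $a(fu)+b(fv)=f(au+bv)=0$, so $K$ is closed;
\item $f(br,ar)=(b(fr),a(fr))$, so $S$ is closed;
\item $a(ft)=f(at)=0$, so $\ann(a)$ is closed, and hence so is $b\,\ann(a)$;
\item $b(fv)=f(bv)\in(a)$ whenever $bv\in(a)$, so $(a\cln b)$ is closed.
\end{itemize}
The containments $b\,\ann(a)\subseteq\ann(a)$, $S\subseteq K$ and $(a)\subseteq(a\cln b)$ were already used in Theorem~\ref{thm:exact}. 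Hence the three quotients are $\R$-modules.

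Next I would check that the maps commute with multiplication by $f\in\R$. For $\iota$ we have $\iota(f t+b\,\ann(a))=[(ft,0)]=f\,[(t,0)]$. For $\pi$ we have $\pi(f[(u,v)])=fv+(a)=f\,(v+(a))$. Well-definedness of both maps is already established in Theorem~\ref{thm:exact}. Specializing $f$ to a monomial $x^iy^j$ gives equivariance under $G$, and the monomial action defines a group action on each module because $x^iy^j\cdot x^{i'}y^{j'}=x^{i+i'}y^{j+j'}$ and the identity monomial acts trivially.

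Finally, the annihilator family is $\ker\pi\setminus\{0\}$ (or $\ker\pi$, depending on convention), and the colon family is its complement $\{c:\pi(c)\neq0\}$. If $\pi(c)=0$ then $\pi(gc)=g\,\pi(c)=0$. Conversely, if $\pi(gc)=0$ then $\pi(c)=g^{-1}\pi(gc)=0$, since monomials are units in $\R$. Thus both families are $G$-stable, and so each is a union of $G$-orbits. There is no real obstacle here; the only point requiring care is invertibility of monomials, which is what makes the complement (the colon family) stable as well as the kernel.
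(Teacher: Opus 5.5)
Your proposal is correct and follows the same route as the paper: both get $\R$-linearity of $\iota$ and $\pi$ from the fact that multiplication and ideal membership commute with multiplication by ring elements, then specialize to monomials to obtain $G$-equivariance and $G$-stability of kernels and images. You spell out details the paper leaves implicit, namely the submodule checks and the use of monomials being units to show that the colon family, the complement of $\ker\pi$, is $G$-stable.
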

\begin{proof}
Each map in \eqref{eq:ses} is defined by multiplication and ideal membership, both of which commute with the action of $G$ (left multiplication by a monomial). Thus $\iota$ and $\pi$ intertwine the $G$-actions on the three modules, and the images and kernels are $G$-stable subspaces.
\end{proof}
\noindent This equivariance is what makes the orbit-reduced search of Section~\ref{sec:lb} consistent: translating a witness or a support stays within the same family.

\subsection{Componentwise coset-leader formulation}
The projection $\pi$ also gives a convenient way to set up the distance problem. A logical class belongs to the annihilator component when its second coordinate lies in $(a)$; otherwise it belongs to the colon component. The weight can then be minimized separately over the two sets of classes,
\begin{equation}
\mathcal A=\{[(u,v)]\ne0:v\in(a)\},\quad
\mathcal C=\{[(u,v)]:v\notin(a)\},
\end{equation}
the annihilator and colon classes. For the colon side, fix $v\in(a\cln b)\setminus(a)$. The equation $au=bv$ determines an affine space of possible $u$'s, namely a coset $u_0+\ann(a)$; write
\begin{equation}\label{eq:lambda}
\lambda_a(w)=\begin{cases}\min\{\wt(u):au=w\} & w\in(a),\\[2pt] +\infty & \text{otherwise,}\end{cases}
\end{equation}
for the coset-leader weight of that space. With this convention the colon minimum below may equivalently be taken over all $v\in\R\setminus(a)$, since $\lambda_a(bv)=+\infty$ unless $bv\in(a)$, i.e.\ unless $v\in(a\cln b)$.

\begin{theorem}[componentwise distance formula and one-sided witness bound]\label{thm:ddec}
With $\mathcal A,\mathcal C$ as above,
\begin{equation}\label{eq:ddec}
d_Z=\min(d_{\mathcal A},d_{\mathcal C}),
\end{equation}
where the two components are given exactly by
\begin{align}
d_{\mathcal A}&=\min_{\substack{t\in\ann(a)\setminus b\,\ann(a)\\ r\in\R}}\bigl(\wt(t+br)+\wt(ar)\bigr),\label{eq:dann}\\
d_{\mathcal C}&=\min_{\,v\in(a\cln b)\setminus(a)}\bigl(\wt(v)+\lambda_a(bv)\bigr).\label{eq:dcol}
\end{align}
Setting $r=0$ in \eqref{eq:dann} gives the one-sided witness bound
\begin{equation}\label{eq:wann}
d_{\mathcal A}\le w_{\ann}:=\min\{\wt(t):t\in\ann(a)\setminus b\,\ann(a)\}.
\end{equation}
The spaces $\ann(a)$ and $b\,\ann(a)$ are found by Gaussian elimination in $O(N^3)$ time, but the minimum weight $w_{\ann}$ is itself a minimum-weight problem over the quotient $\ann(a)/b\,\ann(a)$ and is NP-hard in general. Any representative found by a bounded search over $\ann(a)\setminus b\,\ann(a)$ has weight at least $w_{\ann}$ and gives the upper bound $d_{\mathcal A}\le w_{\ann}^{\mathrm{found}}$.
The symmetric projection $u\mapsto u+(b)$ gives an equivalent componentwise formulation of the same quotient $K/S$; the two parameterizations may produce different explicit witnesses, and the lighter witness is the stronger upper bound, but the exact value $d_Z$ is unchanged. The code distance is $d=\min(d_X,d_Z)$, with $d_X$ treated in Section~\ref{sec:lb}.
\end{theorem}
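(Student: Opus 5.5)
The plan is to use the exact sequence of Theorem~\ref{thm:exact} to partition the nonzero classes of $K/S$ by the value of the stabilizer-invariant label $\pi$, and then to minimize weight over each part separately. By \eqref{eq:dz}, $d_Z$ is the minimum of $\wt(u)+\wt(v)$ over all representatives $(u,v)$ of all nonzero classes. Every nonzero class has either $\pi=0$ (so it lies in $\mathcal A$) or $\pi\neq0$ (so it lies in $\mathcal C$), and these two sets are disjoint. A minimum over a disjoint union is the minimum of the two partial minima. This gives \eqref{eq:ddec} once we define $d_{\mathcal A}$ and $d_{\mathcal C}$ as the minimum weight over all representatives of classes in $\mathcal A$ and in $\mathcal C$ respectively. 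What remains is to show that these minima equal the explicit expressions \eqref{eq:dann} and \eqref{eq:dcol}.

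\emph{Annihilator component.} By the kernel identification in the proof of Theorem~\ref{thm:exact}, a class with $\pi=0$ has a representative $(t,0)$ with $t\in\ann(a)$. It is nonzero exactly when $t\notin b\,\ann(a)$. Its full set of representatives is $\{(t+br,\,ar):r\in\R\}$. Conversely, every such pair lies in $K$, and its class has $\pi=ar+(a)=0$. Hence, as $t$ ranges over $\ann(a)\setminus b\,\ann(a)$ and $r$ over $\R$, these pairs sweep out exactly the representatives of classes in $\mathcal A$, and minimizing weight gives \eqref{eq:dann}. Taking $r=0$ restricts the minimization, which gives \eqref{eq:wann}.

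\emph{Colon component.} A class in $\mathcal C$ is represented by a pair $(u,v)\in K$ with $v\notin(a)$. Membership in $K$ forces $bv=au\in(a)$, so $v\in(a\cln b)\setminus(a)$. Conversely, for such a $v$, every $u$ with $au=bv$ yields an element of $K$ whose class lies in $\mathcal C$; in particular that class is nonzero, since $\pi\ne0$. Thus the representatives of $\mathcal C$ are exactly the pairs $(u,v)$ with $v\in(a\cln b)\setminus(a)$ and $au=bv$. For fixed $v$, the best choice of $u$ has weight $\lambda_a(bv)$ by \eqref{eq:lambda}, and minimizing over $v$ gives \eqref{eq:dcol}. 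The remark after \eqref{eq:lambda} covers the extension of the minimum to all of $\R\setminus(a)$.

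The remaining statements are short. $\ann(a)=\ker A$ and $b\,\ann(a)=B(\ker A)$ are obtained by Gaussian elimination on $N\times N$ matrices in $O(N^3)$ time. Computing $w_{\ann}$ is a minimum-weight coset problem over the linear code $\ann(a)$ modulo a subcode, and this contains the classical minimum-distance problem, which is NP-hard, so no polynomial-time algorithm is expected in general. Any explicitly found $t\in\ann(a)\setminus b\,\ann(a)$ is a feasible point, so it is an upper bound. For the symmetric projection I would apply the same argument to Corollary~\ref{cor:sym}. Both projections partition the same set of nonzero classes of $K/S$, so both return the same $d_Z$, even though they may produce different explicit witnesses.

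I do not expect a real obstacle. The only step that needs care is the colon component: one must check that every $v\in(a\cln b)\setminus(a)$ together with any solution $u$ really gives a nonzero class, with no stabilizer accidentally among them. This follows because $\pi$ is invariant under stabilizer shifts and is nonzero on such pairs.
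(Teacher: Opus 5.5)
Your proposal is correct and is essentially the paper's proof: you partition the nonzero classes by the stabilizer-invariant label $\pi$, parametrize the representatives of $\mathcal A$ as $(t+br,ar)$ via Theorem~\ref{thm:exact}, and minimize $\wt(v)+\lambda_a(bv)$ over $v\in(a\cln b)\setminus(a)$ for $\mathcal C$; your exact description of the representative set of $\mathcal C$ does the work of the paper's remark that replacing $v$ by $v+ar$ is a stabilizer shift. The one loose point, which the paper shares, is the NP-hardness aside: $\ann(a)$ is an ideal of a group algebra (an abelian code), so the claim that computing $w_{\ann}$ ``contains the classical minimum-distance problem'' needs an actual reduction and does not follow automatically.
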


\begin{proof}
Whether $v\in(a)$ is a class invariant, since stabilizers change $v$ by $ar\in(a)$; this gives the partition \eqref{eq:ddec}. For \eqref{eq:dann}, Theorem~\ref{thm:exact} writes every element of $\mathcal A$ as $(t+br,ar)$ with $t\in\ann(a)\setminus b\,\ann(a)$ and $r\in\R$, and its weight is $\wt(t+br)+\wt(ar)$. For \eqref{eq:dcol}, a colon class has all representatives with $v\notin(a)$, and for fixed $v$ the lightest left component solving $au=bv$ has weight $\lambda_a(bv)$; minimizing over $v$ gives the formula. Although the outer minimization in \eqref{eq:dcol} ranges over representatives $v$ rather than classes, this does not change the minimum, since replacing $v$ by $v+ar$ is exactly the effect of adding a stabilizer. Finally $r=0$ yields \eqref{eq:wann}.
\end{proof}

Figure~\ref{fig:geom} summarizes the componentwise minimum in \eqref{eq:ddec}.

\begin{figure}[!t]
\centering
\begin{tikzpicture}[font=\scriptsize]
 \draw[fill=gray!10,draw=gray!55] (0,0) ellipse (3.25cm and 1.75cm);
 \node[gray!75] at (0,1.42) {$K/S$};

 \draw[fill=blue!10,draw=blue!50,dashed] (-1.45,0) ellipse (1.15cm and 0.82cm);
 \node[blue!70] at (-1.45,0.25) {$\mathcal A$};
 \node[blue!70,align=center] at (-1.45,-0.28) {$\pi=0$\\minimum $d_{\mathcal A}$};

 \draw[fill=red!10,draw=red!50,dashed] (1.45,0) ellipse (1.15cm and 0.82cm);
 \node[red!70] at (1.45,0.25) {$\mathcal C$};
 \node[red!70,align=center] at (1.45,-0.28) {$\pi\ne0$\\minimum $d_{\mathcal C}$};

 \node[below] at (0,-1.95) {$d_Z=\min(d_{\mathcal A},d_{\mathcal C})$};
\end{tikzpicture}
\caption{Schematic of the componentwise distance decomposition. The projection $\pi$ partitions the nonzero logical classes into the annihilator component $\mathcal A$ and the colon component $\mathcal C$. The minimum weight is computed within each component and the smaller value gives $d_Z$. Representative shape is a separate label, introduced in Section~\ref{sec:tax}.}
\label{fig:geom}
\end{figure}

\noindent\textbf{Matrix form.}  The two minimisations in the matrix dictionary are those of \eqref{eq:dannmat}--\eqref{eq:dcolmat}, which we record here for reference. The minimisation for $d_{\mathcal A}$ asks for the lightest nonzero vector of the form $(t{+}Br,\,Ar)$ with $t\in\ker A\setminus B(\ker A)$; geometrically, $Ar$ parametrises the right block and $Br$ the corresponding left-block correction. This is a minimum-weight search over a coset of the binary code $\operatorname{im}[B\,;\,A]$ inside $\ker A\oplus\F_2^N$, which is exactly the stabilizer space restricted to annihilator-type codewords. The minimisation for $d_{\mathcal C}$ treats $v$ as a right-block seed and $\lambda_A(Bv)$ as the cheapest left partner; both components are computed in $\F_2^N$ by linear algebra plus a weight minimisation. The structural dimensions in Table~\ref{tab:profile} come from these matrix operations; the weight columns additionally require the minimum-weight searches stated explicitly above.

The algebraic part of Equations \eqref{eq:dann}--\eqref{eq:dcol} is exact: the algebra fixes the structure, namely the cosets of $\ann(a)$ and the colon ideal $(a\cln b)$. The remaining computation is a minimum-weight search within those cosets, the coset-leader weight $\lambda_a$ and the minimizations over it. In the matrix dictionary of Section~\ref{sec:matrix} these read
\begin{align}
d_{\mathcal A}&=\min_{\substack{t\in\ker A\setminus B(\ker A)\\ r\in\F_2^N}}\wt\begin{bmatrix}t+Br\\ Ar\end{bmatrix},\label{eq:dannmat}\\
d_{\mathcal C}&=\min_{v\in B^{-1}(\operatorname{im}A)\setminus\operatorname{im}A}\bigl(\wt(v)+\lambda_A(Bv)\bigr),\label{eq:dcolmat}
\end{align}
with $\lambda_A(w)=\min\{\wt(u):Au=w\}$, so each component is a coset-leader weight of a binary linear code.

\begin{remark}[what is exact and what remains a search]\label{rem:exact}
Equations~\eqref{eq:dann}--\eqref{eq:dcol} are exact identities, but neither is a closed-form distance formula. The residual minimizations are coset-leader problems for binary linear codes and are NP-hard in general~\cite{vardy1997}. In particular, the one-sided quantity
\[
w_{\ann}=\min\{\wt(t):t\in\ann(a)\setminus b\ann(a)\}
\]
is only an upper bound on $d_{\mathcal A}$: adding a stabilizer can make the same annihilator class lighter while populating the other block. The $\code{18}{4}{4}$ example has $w_{\ann}=6$ but $d_{\mathcal A}=4$.
\end{remark}

\subsection{Spectral interpretation in the semisimple case}\label{sec:spectral}
When $\ell$ and $m$ are odd, $x^\ell-1$ and $y^m-1$ are separable and $\R$ is semisimple. Over a splitting field $\mathbb F_q\supseteq\F_2$ containing all $\ell$th and $m$th roots of unity, $\R\otimes\mathbb F_q\cong\prod_\chi\mathbb F_q$ indexed by characters $\chi=(\alpha,\beta)$ with $\alpha^\ell=\beta^m=1$, and multiplication by $f$ acts on the $\chi$-component as the scalar $f(\chi)=f(\alpha,\beta)$. The syzygy condition $au+bv=0$ becomes, fibrewise, $a(\chi)\hat u_\chi+b(\chi)\hat v_\chi=0$.

\begin{proposition}[semisimple fiber form]\label{prop:spec}
Let $\ell,m$ be odd. A character $\chi$ contributes to $K/S$ if and only if $a(\chi)=b(\chi)=0$, and each such common zero contributes a two-dimensional fiber. Consequently
\begin{equation}\label{eq:speck}
k=2\,\bigl|\{\chi:a(\chi)=b(\chi)=0\}\bigr|,
\end{equation}
and the annihilator and colon parts each account for one dimension per common zero, so $r_{\mathcal A}=r_{\mathcal C}=|\{\chi:a(\chi)=b(\chi)=0\}|=k/2$.
\end{proposition}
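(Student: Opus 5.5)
The plan is to extend scalars to a splitting field and reduce everything to one-dimensional linear algebra, one character at a time. Since $\ell,m$ are odd, $x^\ell-1$ and $y^m-1$ are separable over $\F_2$. Choose $\mathbb F_q$ containing all $\ell$th and $m$th roots of unity. The Chinese remainder theorem then gives an isomorphism $\R\otimes\mathbb F_q\cong\prod_\chi\mathbb F_q$, $f\mapsto(f(\chi))_\chi$, over the $\ell m$ characters $\chi=(\alpha,\beta)$. Every object in the exact sequence~\eqref{eq:ses} is defined by $\F_2$-linear conditions: kernels, images, products and preimages of multiplication maps. Extension of scalars $-\otimes_{\F_2}\mathbb F_q$ is exact and commutes with kernels, images and quotients, so $\F_2$-dimensions equal $\mathbb F_q$-dimensions after extension. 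It therefore suffices to compute $K/S$, $\ann(a)/b\ann(a)$ and $(a\cln b)/(a)$ over $\mathbb F_q$. There each decomposes as a direct sum over $\chi$, because all maps are $\R$-linear and hence respect the idempotent decomposition.

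Fix one character $\chi$ and write $\alpha=a(\chi)$, $\beta=b(\chi)\in\mathbb F_q$. The fibre of $K$ is $\{(\hat u,\hat v)\in\mathbb F_q^2:\alpha\hat u+\beta\hat v=0\}$, and the fibre of $S$ is $\{(\beta r,\alpha r)\}$. If $(\alpha,\beta)\ne(0,0)$, then $K_\chi$ is a line and $S_\chi$ is a nonzero line inside it, since $\alpha\beta+\beta\alpha=0$ in characteristic $2$; hence $S_\chi=K_\chi$ and the fibre of $K/S$ is zero. If $\alpha=\beta=0$, then $K_\chi=\mathbb F_q^2$ and $S_\chi=0$, giving a two-dimensional fibre. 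Summing over $\chi$ gives~\eqref{eq:speck}.

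For the split $r_{\mathcal A}=r_{\mathcal C}$, I compute the fibres of the two outer terms directly.
\begin{itemize}
\item $\ann(a)_\chi$ is $\mathbb F_q$ if $\alpha=0$ and $0$ otherwise. $(b\ann(a))_\chi$ is $\beta\cdot\ann(a)_\chi$, which is nonzero only when $\alpha=0$ and $\beta\ne0$. So the annihilator quotient has a one-dimensional fibre exactly at common zeros.
\item $(a)_\chi$ is $\mathbb F_q$ iff $\alpha\ne0$. $(a\cln b)_\chi=\{\hat v:\beta\hat v\in(a)_\chi\}$ is all of $\mathbb F_q$ unless $\alpha=0$ and $\beta\ne0$, in which case it is $0$. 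So the colon quotient is $\mathbb F_q$ exactly when $\alpha=\beta=0$.
\end{itemize}
Thus each common zero contributes one dimension to each side, and $r_{\mathcal A}=r_{\mathcal C}=|\{\chi:a(\chi)=b(\chi)=0\}|=k/2$. This agrees with Lemma~\ref{lem:frob} and Corollary~\ref{cor:basis}.

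The main point needing care is the legitimacy of the base change. Annihilators, colon ideals and ideal products must commute with $\otimes\mathbb F_q$ and with the CRT decomposition. I will argue this by phrasing each object in the matrix dictionary of Table~\ref{tab:glance}: $\ker A$, $B(\ker A)$, $B^{-1}(\operatorname{im}A)$ and $\operatorname{im}A$. Ranks of matrices over $\F_2$ are unchanged under field extension. Over $\mathbb F_q$, $A$ and $B$ are simultaneously diagonalised by the Fourier basis with diagonal entries $a(\chi)$ and $b(\chi)$, and the fibrewise descriptions follow.
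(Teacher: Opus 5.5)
Your proposal is correct and follows the paper's proof essentially step for step: you extend scalars to a splitting field, decompose $\R\otimes\mathbb F_q$ into character fibres, check that $S_\chi$ fills the line $K_\chi$ unless $a(\chi)=b(\chi)=0$, and read off one dimension on each side of the sequence at each common zero. Your explicit colon-fibre computation and your matrix-dictionary justification of base change spell out what the paper compresses into ``the colon side is symmetric'' and ``extension of scalars is flat.'' One cosmetic point: you reuse $\alpha,\beta$ both for the coordinates of $\chi$ and for the values $a(\chi),b(\chi)$.
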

\begin{proof}
Fix a character $\chi$. The solution space of $a(\chi)\hat u_\chi+b(\chi)\hat v_\chi=0$ in $\mathbb F_q^2$ is two-dimensional when $a(\chi)=b(\chi)=0$ and one-dimensional otherwise. The stabilizer fiber is spanned by $(b(\chi),a(\chi))$, nonzero exactly when $(a(\chi),b(\chi))\ne(0,0)$; there it fills the one-dimensional solution space, so $K_\chi/S_\chi=0$. When $a(\chi)=b(\chi)=0$ the stabilizer fiber is zero and $K_\chi/S_\chi$ is two-dimensional, giving \eqref{eq:speck}. On the annihilator side $\ann(a)_\chi\ne0$ iff $a(\chi)=0$, and $(b\,\ann(a))_\chi$ fills it unless $b(\chi)=0$ as well, so $\ann(a)/b\,\ann(a)$ has one dimension exactly at the common zeros; the colon side is symmetric. Since extension of scalars to the splitting field is flat, the $\F_2$-dimensions are recovered by summing over the Frobenius-stable character fibres.
\end{proof}

Equation \eqref{eq:speck} recovers the common-zero (gcd) dimension count \cite{linpryadko2024,postema2026} from the quotient side. For $\code{18}{4}{4}$ the common zeros are the Frobenius-conjugate pair $(\omega,\omega^2),(\omega^2,\omega)$ over $\F_4$, giving $k=4$; for $\code{90}{8}{10}$ there are four common zeros over $\F_{16}$, giving $k=8$, both confirmed by direct evaluation. Over $\F_2$ the set of common zeros is Frobenius-closed, and the binary dimension equals the total size of the corresponding Frobenius orbits, equivalently the number of common-zero characters over the splitting field. In the semisimple case the spectral form recovers the general Frobenius balance $r_{\mathcal A}=r_{\mathcal C}=k/2$ character by character. 
\begin{example}[common zeros for $\code{18}{4}{4}$]\label{ex:spectral}
With $\ell=m=3$ odd, $\R$ is semisimple. The splitting field is $\F_4=\F_2[\omega]$ where $\omega^2+\omega+1=0$.
The characters are $\chi=(\alpha,\beta)$ with $\alpha^3=\beta^3=1$, i.e.\ $\alpha,\beta\in\{1,\omega,\omega^2\}$.
Since $b=a^2$, the condition $b(\chi)=0$ is equivalent to $a(\chi)=0$, so the
common-zero locus is
\[
a(\chi)=\alpha+\beta+\alpha\beta=0.
\]
Checking all nine pairs, only $(\omega,\omega^2)$ and $(\omega^2,\omega)$ satisfy this
(e.g.\ $\omega+\omega^2+\omega\cdot\omega^2=\omega+\omega^2+1=0$ in $\F_4$).
These two characters form one Frobenius orbit of size two. Each contributes a
two-dimensional fiber to $K/S$ over the splitting field (Proposition~\ref{prop:spec}),
so the orbit accounts for $2\times2=4$ binary dimensions, giving $k=4$.
Splitting the fiber between the two families, each character adds one dimension to
$r_{\mathcal A}$ and one to $r_{\mathcal C}$, which recovers $r_{\mathcal A}=r_{\mathcal C}=2$
from Example~\ref{ex:seq}.
\end{example}
For repeated-root rings the character decomposition is unavailable, but Lemma~\ref{lem:frob} gives the same dimension balance directly. The spectral description is therefore an interpretation and a dimension check, not an assumption needed by the distance analysis.

\section{Algebraic component versus representative shape}\label{sec:tax}
The exact sequence classifies logical \emph{classes}; Hamming weight is a property of their \emph{representatives}. These two levels should not be conflated. We therefore use two independent labels throughout the remainder of the paper.

\subsection{Two independent labels}
For the fixed projection $\pi[(u,v)]=v+(a)$, a nonzero logical class is in the \emph{annihilator component} $\mathcal A$ when $\pi=0$ and in the \emph{colon component} $\mathcal C$ when $\pi\ne0$. This dichotomy is algebraic and class-invariant.

For shape, let $\mathcal M([c])$ be the set of minimum-weight representatives of a logical class $[c]$. Fix $\tau\ge1$.
\begin{definition}[representative shape]\label{def:shape}
A nonzero logical class is
\begin{enumerate}[leftmargin=1.4em,itemsep=1pt]
\item \emph{one-sided} if some $(u,v)\in\mathcal M([c])$ has $u=0$ or $v=0$;
\item \emph{$\tau$-lopsided} if it is not one-sided and some $(u,v)\in\mathcal M([c])$ satisfies $1\le\min\{\wt(u),\wt(v)\}\le\tau$;
\item \emph{$\tau$-balanced} if every $(u,v)\in\mathcal M([c])$ satisfies $\min\{\wt(u),\wt(v)\}>\tau$.
\end{enumerate}
\end{definition}
These three shapes partition the nonzero logical classes once $\tau$ is fixed. The definitions do not exclude any of the three shapes in either algebraic component. The examples studied here use $\tau=3$, the maximum of $\wt(a)$ and $\wt(b)$. In particular, Example~\ref{ex:72} exhibits one-sided minimum representatives in both $\mathcal A$ and $\mathcal C$. Figure~\ref{fig:modes} illustrates the three support shapes, while Fig.~\ref{fig:twoaxes} summarizes the distinction between algebraic component and representative shape.

\begin{figure}[!t]
\centering
\begin{tikzpicture}[scale=0.72, cell/.style={draw,minimum size=2.7mm,inner sep=0pt}]
  \node at (-2.0,1.05){\scriptsize one-sided};
  \foreach \i in {0,1,2}{\node[cell,fill=black!35] at (-2.9+\i*0.3,0.4){};}
  \foreach \i in {0,1,2}{\node[cell] at (-1.5+\i*0.3,0.4){};}
  \node at (-2.0,-0.15){\scriptsize one block is zero};

  \node at (1.6,1.05){\scriptsize lopsided};
  \foreach \i in {0,1,2}{\node[cell,fill=black!35] at (0.7+\i*0.3,0.4){};}
  \node[cell,fill=black!35] at (2.1,0.4){};\node[cell] at (2.4,0.4){};\node[cell] at (2.7,0.4){};
  \node at (1.6,-0.15){\scriptsize one side has weight $\le\tau$};

  \node at (5.1,1.05){\scriptsize balanced};
  \node[cell,fill=black!35] at (4.2,0.4){};\node[cell,fill=black!35] at (4.5,0.4){};\node[cell] at (4.8,0.4){};
  \node[cell,fill=black!35] at (5.45,0.4){};\node[cell,fill=black!35] at (5.75,0.4){};\node[cell] at (6.05,0.4){};
  \node at (5.1,-0.15){\scriptsize both sides exceed $\tau$};
\end{tikzpicture}
\caption{Support shapes used in Definition~\ref{def:shape}. Filled cells denote nonzero coordinates and the gap separates the left ($u$) and right ($v$) blocks. These are properties of minimum-weight representatives, not algebraic component labels.}
\label{fig:modes}
\end{figure}

\begin{figure}[!t]
\centering
\begin{tikzpicture}[font=\scriptsize,
  cell/.style={draw,rounded corners,align=center,minimum width=31mm,minimum height=12mm,inner sep=2pt},
  hd/.style={font=\scriptsize\bfseries,align=center}]
 \node[hd] at (-1.8,1.25) {$\mathcal A$: $\pi=0$};
 \node[hd] at (1.8,1.25) {$\mathcal C$: $\pi\ne0$};
 \node[rotate=90,hd] at (-4.5,0.45) {one-sided minimum};
 \node[rotate=90,hd] at (-4.0,-1.05) {two-sided minimum};
 \node[cell] at (-1.8,0.45) {$\code{72}{12}{6}$\\$((1+y^2)a,0)$};
 \node[cell] at (1.8,0.45) {$\code{72}{12}{6}$: $(0,s)$,\\$s\in\ann(b)\setminus(a)$, split $(0,6)$};
 \node[cell] at (-1.8,-1.05) {$\code{18}{4}{4}$\\$(1,b)$, split $(1,3)$};
 \node[cell] at (1.8,-1.05) {$\code{18}{4}{4}$: $(a,1)$\\$\code{108}{8}{10}$: split $(6,4)$};
 \draw[gray!45,dashed] (0,-1.82)--(0,1.0);
\end{tikzpicture}
\caption{Algebraic component and support shape are different axes. Membership in $\mathcal A$ or $\mathcal C$ is fixed by the projection $\pi$, whereas adding a stabilizer can change the support shape of a representative. The $\code{18}{4}{4}$ code shows that an annihilator class can be lighter after populating both blocks, while $\code{72}{12}{6}$ shows that a colon class can have a one-sided minimum representative. Lopsided versus balanced further refines the two-sided row through the threshold $\tau$.}
\label{fig:twoaxes}
\end{figure}

Figure~\ref{fig:tree} summarizes the resulting hierarchy of representative shapes.

\begin{figure}[!t]
\centering
\begin{tikzpicture}[scale=0.92, every node/.style={font=\footnotesize},
  bx/.style={draw,rounded corners,inner sep=2.5pt,align=center}]
 \node[bx,fill=black!8] (top) at (0,0) {minimum-weight\\representatives $\mathcal M([c])$};
 \node[bx,fill=black!4] (one) at (-2.1,-1.3) {one-sided};
 \node[bx,fill=black!4] (two) at (2.1,-1.3) {two-sided};
 \node[bx,fill=black!3] (lop) at (0.7,-2.8) {$\tau$-lopsided};
 \node[bx,fill=black!3] (bal) at (3.3,-2.8) {$\tau$-balanced};
 \draw[-{Latex}] (top)--(one); \draw[-{Latex}] (top)--(two);
 \draw[-{Latex}] (two)--(lop); \draw[-{Latex}] (two)--(bal);
 \node[align=center,font=\scriptsize] at (-2.1,-2.35){one block\\vanishes};
 \node[align=center,font=\scriptsize] at (2.0,-3.65){weight-shape refinement\\for two-sided minima};
\end{tikzpicture}
\caption{Hierarchy of representative shapes in Definition~\ref{def:shape}. This classification is independent of the annihilator/colon component label; Fig.~\ref{fig:twoaxes} shows the two axes together.}
\label{fig:tree}
\end{figure}

\subsection{Search order and what each step establishes}
The distinction above changes how the inexpensive screens should be interpreted. Gaussian elimination computes $\ann(a)$, $b\ann(a)$, $(a\cln b)$, and their symmetric counterparts in polynomial time. A subsequent low-weight search over those spaces is not free, and its output is an upper-bound witness and does not by itself establish a lower bound. A practical order is therefore:
\begin{enumerate}[leftmargin=1.4em,itemsep=1pt]
\item search left-only representatives $t\in\ann(a)\setminus b\ann(a)$; these are in $\mathcal A$ under $\pi$;
\item search right-only representatives $s\in\ann(b)\setminus a\ann(b)$; under $\pi$ these need not be in $\mathcal A$ and must be labelled by evaluating $\pi$;
\item search sparse two-block relations $au=bv$ and record both their component label and their weight split;
\item use a lower-bound method only for candidates that remain competitive.
\end{enumerate}
Algorithm~\ref{alg:screen} summarizes these upper-bound searches. This ordering is a heuristic for finding small upper bounds. The exact component minima remain $d_{\mathcal A}$ and $d_{\mathcal C}$ from Theorem~\ref{thm:ddec}.

\subsection{Worked examples}\label{sec:examples}

\begin{example}[one-sided witness for $\code{72}{12}{6}$]\label{ex:72}
With $\ell=m=6$, $a=x^3+y+y^2$, and $b=y^3+x+x^2$, we derive the weight-$6$ witness $u=(1+y^2)a$ from the ring arithmetic alone, without any search.

\emph{Step 1.}  Squaring is additive over $\F_2$, and $x^6=1$ in $\R$, so
\[
a^2=(x^3+y+y^2)^2=x^6+y^2+y^4=1+y^2+y^4.
\]

\emph{Step 2.}  Multiply both sides by $(1+y^2)$:
\[
\begin{aligned}(1+y^2)\cdot a^2 &= 1+y^2+y^4+y^2+y^4+y^6\\
&=1+y^6=1+1=0,\end{aligned}
\]
using $y^6=1$.  Hence $(1+y^2)a^2=0$ in $\R$.

\emph{Step 3.}  Factor as $a\cdot\bigl[(1+y^2)a\bigr]=0$, so $t=(1+y^2)a\in\ann(a)$.  Its support has six monomials: $\{x^3,y,y^2,x^3y^2,y^3,y^4\}$, so $\wt(t)=6$.

The candidate is $(u,v)=\bigl((1+y^2)a,0\bigr)$. A direct membership test gives $t\notin b\ann(a)$, so it is a nontrivial class in $\mathcal A$ and $d_{\mathcal A}\le6$. The construction is illustrated in Fig.~\ref{fig:72witness}. The cluster search of Section~\ref{sec:cluster} with radius $5$ finds no logical, so $d_Z\ge6$ and $d=6$.

The same computation with the roles of $a$ and $b$ exchanged gives $b^2=y^6+x^2+x^4=1+x^2+x^4$ and $(1+x^2)b^2=1+x^6=0$, so $s=(1+x^2)b\in\ann(b)$ with $\wt(s)=6$. Ideal membership gives $s\notin(a)$, so $(0,s)$ is a weight-$6$ logical in the colon component with a one-sided representative.
\end{example}

\begin{figure}[!t]
\centering
\begin{tikzpicture}[scale=0.36, font=\scriptsize,
  dot/.style={circle, fill=black!80, minimum size=2mm, inner sep=0pt},
  sq/.style={circle, fill=blue!60, minimum size=2mm, inner sep=0pt},
  od/.style={circle, draw, gray, minimum size=2mm, inner sep=0pt}]

  \begin{scope}[shift={(0,0)}]
    \draw[step=1, gray!35, thin] (0,0) grid (6,6);
    \node at (3,6.5) {$a=x^3{+}y{+}y^2$};
    \foreach \i in {0,...,5}\foreach \j in {0,...,5}{\node[od] at (\i+.5,\j+.5){};}
    
    \node[dot] at (3.5,0.5){}; 
    \node[dot] at (0.5,1.5){}; 
    \node[dot] at (0.5,2.5){}; 
    
    \node at (3,-0.7) {$x$-axis ($i$)};
    \node[rotate=90] at (-0.8,3) {$y$-axis ($j$)};
  \end{scope}

  \draw[-{Latex}, gray!70, semithick] (6.2,3) -- (8.0,3) node[midway, above=1pt]{${(\cdot)}^2$};

  \begin{scope}[shift={(8.2,0)}]
    \draw[step=1, gray!35, thin] (0,0) grid (6,6);
    \node at (3,6.5) {$a^2=1{+}y^2{+}y^4$};
    \foreach \i in {0,...,5}\foreach \j in {0,...,5}{\node[od] at (\i+.5,\j+.5){};}
    
    \node[sq] at (0.5,0.5){}; 
    \node[sq] at (0.5,2.5){}; 
    \node[sq] at (0.5,4.5){}; 
    
    \node at (3,-0.7) {$x$-axis ($i$)};
  \end{scope}

  \draw[-{Latex}, gray!70, semithick] (14.4,3) -- (16.2,3) node[midway, above=1pt]{${\times(1{+}y^2)}$};

  \begin{scope}[shift={(16.4,0)}]
    \draw[step=1, gray!35, thin] (0,0) grid (6,6);
    \node at (3,6.5) {$t=(1{+}y^2)a$};
    \foreach \i in {0,...,5}\foreach \j in {0,...,5}{\node[od] at (\i+.5,\j+.5){};}
    
    \node[dot] at (3.5,0.5){}; 
    \node[dot] at (0.5,1.5){}; 
    \node[dot] at (0.5,2.5){}; 
    \node[dot] at (3.5,2.5){}; 
    \node[dot] at (0.5,3.5){}; 
    \node[dot] at (0.5,4.5){}; 
    
    \node at (3,-0.7) {$x$-axis ($i$)};
  \end{scope}
\end{tikzpicture}
\caption{The $\code{72}{12}{6}$ one-sided witness: $(1+y^2)a^2=0$, hence $t=(1+y^2)a\in\ann(a)$ has weight $6$.}
\label{fig:72witness}
\end{figure}
\begin{example}[$\code{18}{4}{4}$: component is not shape]\label{ex:18}
Let $\ell=m=3$, $a=x+y+xy$, and $b=a^2=x^2+y^2+x^2y^2$. The colon relation $b\cdot1=a^2$ gives
\begin{equation}
(u,v)=(a,1),\qquad au+bv=0.
\end{equation}
Because $a$ is a zero divisor, it is not invertible, so $(a)$ is a proper ideal and $1\notin(a)$. Hence this weight-$4$ representative, of split $(3,1)$, lies in the colon component $\mathcal C$.

The same code also has a weight-$4$ annihilator-component logical. Since $b\in(a)$,
\begin{equation}
(u,v)=(1,b)\in K\setminus S,\qquad \pi[(1,b)]=b+(a)=0.
\end{equation}
Its class has a one-sided representative $(t,0)$ with $t=1+a^3\in\ann(a)$ and $\wt(t)=6$. Exhaustive enumeration of the three nonzero elements of $\ann(a)/b\ann(a)$ gives $w_{\ann}=6$, whereas the explicit logical $(1,b)$ shows $d_{\mathcal A}=4$:
\begin{equation}
 w_{\ann}=6,\qquad d_{\mathcal A}=4.
\end{equation}
Thus $d=4$ is attained in both algebraic components, while the displayed minimum representatives are $3$-lopsided. This is the smallest example showing why ``annihilator'' must not be used as a synonym for ``one-sided.''
\end{example}

\begin{example}[balanced minimum for $\code{108}{8}{10}$]\label{ex:108}
Here $\ell=9$, $m=6$, $a=x^3+y+y^2$, and $b=y^3+x+x^2$. The decomposition gives $r_{\mathcal A}=r_{\mathcal C}=4$. The annihilator space has $2^6=64$ elements and $b\ann(a)$ has four; exact enumeration of the remaining $60$ elements gives
\begin{equation}
 w_{\ann}=12.
\end{equation}
This is a one-sided upper bound only; it is not a lower bound on $d_{\mathcal A}$.

The search finds
\begin{align}
 u&=x+xy+x^4y^4+x^5+x^7y^3+x^8y^4,\nonumber\\
 v&=xy^4+x^3y+x^4y^2+x^6y^5,
\end{align}
with $au=bv$, split $(6,4)$, and total weight $10$. Direct ideal-membership tests give $v\notin(a)$ and $u\notin(b)$, so the witness is colon-type under both coordinate projections.

The cluster search with radius $9$ finds no logical, so $d=10$. Since the minimum nonzero weight of $K$ is $6$, Proposition~\ref{prop:irred} shows that the search with radius $11$ lists every logical of weight at most $11$. It returns only the $54$ translates of weight $10$, grouped into nine classes; every one has $\pi\neq0$ and split $(6,4)$, and no class contains a representative with a side of weight at most three. Hence $d_{\mathcal C}=10$, no annihilator logical has weight below $12$, and $w_{\ann}=12$ gives $d_{\mathcal A}=12$. All minimum-weight classes of this code are $3$-balanced colon classes.
\end{example}

\subsection{Explicit witnesses}
Table~\ref{tab:wit} lists a low-weight representative for each standard code with its weight split; all pairs are confirmed to lie in $K\setminus S$ by exact $\F_2$ computation (see Examples~\ref{ex:72}--\ref{ex:108} for the derivations of the $\code{72}{12}{6}$ and $\code{108}{8}{10}$ cases). The Gross-code pair extends the one-sided pattern of $\code{72}{12}{6}$ by an extra sparse factor $(1+x^6)$.

\begin{table*}[!t]
\centering\small
\caption{Verified minimum-weight witnesses. Every witness has weight $d$; the shapes of all minimum-weight classes are in Table~\ref{tab:census}.}
\label{tab:wit}
\renewcommand{\arraystretch}{1.25}
\begin{tabular}{@{}l>{\raggedright\arraybackslash}p{6.2cm}ccc@{}}
\toprule
code & witness & split & component & witness shape \\
\midrule
$\code{18}{4}{4}$    & $(a,1)$; also $(1,b)$ & $(3,1)$; $(1,3)$ & $\mathcal C$; $\mathcal A$ & lopsided \\
$\code{72}{12}{6}$   & $((1+y^2)a,0)$; also $(0,(1+x^2)b)$ & $(6,0)$; $(0,6)$ & $\mathcal A$; $\mathcal C$ & one-sided \\
$\code{90}{8}{10}$   & $((1{+}y)\sum_{j=0}^{4}x^{3j},0)$ & $(10,0)$ & $\mathcal A$ & one-sided \\
$\code{108}{8}{10}$  & see~\eqref{eq:w108} & $(6,4)$ & $\mathcal C$ & balanced \\
$\code{144}{12}{12}$ & $((1{+}x^6)(1{+}y^2)a,0)$ & $(12,0)$ & $\mathcal A$ & one-sided \\
$\code{288}{12}{18}$ & see~\eqref{eq:w288} & $(18,0)$ & $\mathcal A$ & one-sided \\
\bottomrule
\end{tabular}
\end{table*}

\section{Exact lower bounds}\label{sec:lb}

\noindent The componentwise formula identifies two separate searches for upper bounds, but a lower bound still requires ruling out light logicals. We first record the support-exclusion criterion as a baseline exact test; the cluster method introduced below exploits sparsity much more strongly and is the method used for the reported lower bounds. Because $K$ and $S$ are the $Z$-logical objects, the test bounds $d_Z$; the full quantum distance is recovered through Proposition~\ref{prop:sym}. A support $E$ is a subset of the $n=2\ell m$ coordinates, labelled by block and by group element $(i,j)\in\mathbb Z_\ell\times\mathbb Z_m$; the translation group $G$ acts on both blocks simultaneously. For such $E$ write $K(E)=K\cap\F_2^E$ and $S(E)=S\cap\F_2^E$, the elements supported within $E$. Since $S\subseteq K$, also $S(E)\subseteq K(E)$.

\begin{theorem}[support-exclusion criterion]\label{thm:supp}
$d_Z\ge d_0$ if and only if $K(E)=S(E)$ for every support $E$ with $|E|<d_0$.
\end{theorem}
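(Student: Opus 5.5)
The statement is essentially a reformulation of the definition \eqref{eq:dz} of $d_Z$ as the minimum weight of $K\setminus S$. I would prove the two directions separately, using only the observation that every vector $c\in\F_2^n$ lies in $\F_2^E$ for $E=\supp(c)$. Here $\supp(c)$ is the set of block-labelled coordinates on which $c$ is nonzero, and $|\supp(c)|=\wt(c)$.

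For the forward direction, assume $d_Z\ge d_0$. Fix a support $E$ with $|E|<d_0$ and take any $c\in K(E)$. Then $\wt(c)\le|E|<d_0\le d_Z$. By \eqref{eq:dz}, every element of $K\setminus S$ has weight at least $d_Z$, so $c\notin K\setminus S$; since $c\in K$, this forces $c\in S$, hence $c\in S(E)$. Thus $K(E)\subseteq S(E)$. The reverse inclusion $S(E)\subseteq K(E)$ was already noted from $S\subseteq K$, so $K(E)=S(E)$.

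For the converse, argue by contraposition. Suppose $d_Z<d_0$, and pick a minimizer $c\in K\setminus S$ of \eqref{eq:dz}. (If $K=S$ then $k=0$ and $d_Z=+\infty$ by convention, so the statement holds vacuously; I would note this degenerate case explicitly.) Set $E=\supp(c)$, so that $|E|=\wt(c)=d_Z<d_0$. Then $c\in K\cap\F_2^E=K(E)$ but $c\notin S$, so $c\notin S(E)$, and therefore $K(E)\ne S(E)$ for this $E$.

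There is no real obstacle here. The only points needing care are the convention for $d_Z$ when $K=S$, and making clear that ``support'' means an arbitrary coordinate subset, not necessarily translation-reduced. I would add a remark that, by Proposition~\ref{prop:equiv}, translation maps $K$ and $S$ to themselves. Consequently the condition only needs checking on one $E$ per $G$-orbit, and by monotonicity ($E\subseteq E'$ implies $K(E)\subseteq K(E')$) only on supports of size exactly $d_0-1$. These reductions are the ones the later cluster method exploits.
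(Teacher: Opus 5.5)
Your proposal is correct and follows essentially the same route as the paper. Both directions rest on two facts: any $c\in K(E)\setminus S(E)$ is a non-stabilizer logical of weight at most $|E|$, and any $c\in K\setminus S$ lies in $K(E)\setminus S(E)$ for $E=\supp(c)$. Your additional remarks on the $K=S$ convention and on reducing to supports of size exactly $d_0-1$ are sound refinements that the statement itself does not require.
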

\begin{proof}
Pick any $E$ with $|E|<d_0$ and $K(E)\ne S(E)$; a vector $c\in K(E)\setminus S(E)$ is a non-stabilizer logical of weight at most $|E|<d_0$, so $d_Z<d_0$. The converse is immediate: if $c$ is a non-stabilizer logical of weight $<d_0$ then $\supp(c)$ is an $E$ of that size with $c\in K(E)\setminus S(E)$.
\end{proof}

Two reductions cut the cost of the test. First, $K$ and $S$ are invariant under the translation group $G=\mathbb Z_\ell\times\mathbb Z_m$ acting by multiplication by monomials, and translation permutes supports, so $\dim K(E)$ and $\dim S(E)$ depend only on the orbit of $E$; one support per orbit suffices. Second, $\dim K(E)=|E|-\operatorname{rank}\HX[:,E]$, so a support whose check columns are independent passes after a single rank computation, and only the dependent supports need the comparison with $S(E)$. Algorithm~\ref{alg:supp} records the support-exclusion stage. The orbit reduction is realized cheaply by anchoring: the translation acts transitively within each block, so translating one occupied coordinate of a support to the origin of its own block gives a canonical representative. Enumerating the supports through the left-block origin, together with those lying entirely in the right block and through its origin, then visits at least one representative of every orbit without forming orbits explicitly. Explicitly, with $G_Z=\HZ^{\mathsf T}=[\,B;A\,]$ the stabilizer generator matrix and $[\,A\mid B\,]_E$ the check matrix restricted to the columns in $E$, the test is two ranks: $K(E)=\ker([\,A\mid B\,]_E)$, so $\dim K(E)=|E|-\operatorname{rank}([\,A\mid B\,]_E)$, while $S(E)=\operatorname{im}G_Z\cap\F_2^E$ has $\dim S(E)=\dim\ker\big((G_Z)_{E^c}\big)-\dim\ker G_Z$, where $(G_Z)_{E^c}$ is the submatrix of $G_Z$ formed by the rows indexed by the complement $E^c$, the coordinates a stabilizer supported in $E$ must avoid.

The same construction, applied to the $X$-logical quotient
\begin{align*}
K_X&=\{(p,q)\in\R^2:B^{\mathsf T}p+A^{\mathsf T}q=0\},\\
S_X&=\{(A^{\mathsf T}s,B^{\mathsf T}s):s\in\R\},
\end{align*}
bounds $d_X$. For the standard BB construction the two distances coincide, so a single $Z$-side run suffices.

\begin{proposition}[$X/Z$ symmetry]\label{prop:sym}
For every BB code with CSS pair $\HX=[A\mid B]$, $\HZ=[B^{\mathsf T}\mid A^{\mathsf T}]$, one has $d_X=d_Z$ (cf.~\cite{bravyi2024}). The reciprocal map $\rho:x\mapsto x^{-1},\,y\mapsto y^{-1}$ is a weight-preserving involution of $\R$ with $A^{\mathsf T}=\rho A\rho$ and $B^{\mathsf T}=\rho B\rho$, and $(p,q)\mapsto(\rho q,\rho p)$ carries the $X$-logical quotient $K_X/S_X$ onto the $Z$-logical quotient $K/S$ bijectively and weight-preservingly.
\end{proposition}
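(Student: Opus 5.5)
We verify the two matrix identities first, then check that the proposed map sends $K_X$ to $K$ and $S_X$ to $S$ and is weight-preserving and invertible; together these give the bijection of quotients, and comparing minima gives $d_X=d_Z$.

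\emph{Step 1: the matrix identities.} Work in the column convention of Remark~\ref{rem:conv}, where multiplication by $f$ is the matrix $F$ with $F_{g,h}=[g-h\in\supp f]$. Then $F^{\mathsf T}$ is multiplication by the reciprocal $\bar f=\rho(f)$. Since $\rho$ is a ring automorphism, $\rho(fr)=\bar f\,\rho(r)$, which in matrix form reads $\rho F=\bar F\rho=F^{\mathsf T}\rho$. Using $\rho^2=1$, this gives $F^{\mathsf T}=\rho F\rho$, and in particular $A^{\mathsf T}=\rho A\rho$ and $B^{\mathsf T}=\rho B\rho$. Because $\rho$ permutes monomials, it preserves Hamming weight.

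\emph{Step 2: $K_X$ goes to $K$.} Take $(p,q)\in K_X$, so $B^{\mathsf T}p+A^{\mathsf T}q=0$. Substituting Step 1 gives $\rho B\rho p+\rho A\rho q=0$. Applying $\rho$ yields $a(\rho q)+b(\rho p)=0$. This says exactly that $(u,v)=(\rho q,\rho p)$ lies in $K$.

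\emph{Step 3: $S_X$ goes to $S$.} A generator of $S_X$ is $(A^{\mathsf T}s,B^{\mathsf T}s)=(\rho a\rho s,\rho b\rho s)$. Its image is $(\rho B^{\mathsf T}s,\rho A^{\mathsf T}s)=(b\,\rho s,\,a\,\rho s)$. This is the stabilizer $(br,ar)$ with $r=\rho s$. Since $s\mapsto\rho s$ is a bijection of $\R$, $S_X$ maps onto $S$.

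\emph{Step 4: bijection of quotients.} The map $(p,q)\mapsto(\rho q,\rho p)$ is a linear involution-type bijection of $\R^2$; its inverse is the same formula read backwards. Reversing Steps 2 and 3, the inverse carries $K$ into $K_X$ and $S$ into $S_X$. Hence the map gives a bijection $K_X\to K$ carrying $S_X$ onto $S$, so it induces a bijection $K_X/S_X\to K/S$. It also restricts to a weight-preserving bijection $K_X\setminus S_X\to K\setminus S$.

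\emph{Step 5: equality of distances.} By \eqref{eq:dz} and its $X$ analogue, $d_Z$ and $d_X$ are the minima of weight over $K\setminus S$ and $K_X\setminus S_X$. A weight-preserving bijection between these sets gives $d_X=d_Z$.

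The only step needing care is the transpose identity in Step 1. It depends on the index convention, specifically on whether $F_{g,h}$ is indexed by $g-h$ or $h-g$. The identity $F^{\mathsf T}=\bar F$ holds under either convention, so the argument is unaffected.
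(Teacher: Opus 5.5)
Your proof is correct and follows essentially the same route as the paper. You use the transpose identity $A^{\mathsf T}=\rho A\rho$, $B^{\mathsf T}=\rho B\rho$, apply $\rho$ to send $K_X$ into $K$, send each generator $(A^{\mathsf T}s,B^{\mathsf T}s)$ to $(b\,\rho s,\,a\,\rho s)\in S$, and get weight preservation from the block swap composed with $\rho$. You additionally derive the transpose identity from the index formula and spell out the reverse inclusions via the involution on $\R^2$, which the paper leaves implicit when it says the map ``descends to an isomorphism.''
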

\begin{proof}
Recall that for group-algebra matrices, transposition equals multiplication by the reciprocal, $A^{\mathsf T}=\rho A\rho$ and likewise for $B$. If $B^{\mathsf T}p+A^{\mathsf T}q=0$ then $\rho B\rho\,p+\rho A\rho\,q=0$, and applying $\rho$ gives $A(\rho q)+B(\rho p)=0$, so $(\rho q,\rho p)\in K$. The map sends $S_X$, whose generators are $(A^{\mathsf T}s,B^{\mathsf T}s)$, to $(B\rho s,A\rho s)=(br,ar)\in S$ with $r=\rho s$; being an involution composed with a block swap, it preserves Hamming weight. Thus it descends to a weight-preserving isomorphism $K_X/S_X\to K/S$, and $d_X=d_Z$.
\end{proof}

Consequently a proof of $d_Z$ is also a proof of $d=\min(d_X,d_Z)=d_Z$, so the lower-bound search need only run on the $Z$ side. The proposition uses the column-vector convention of Remark~\ref{rem:conv}; under the row-vector convention the same statement holds, conjugated by $\rho$.

\subsection{Syndrome connectivity and anchored cluster search}\label{sec:cluster}
Support exclusion treats all supports alike. The sparsity of $H_X$ gives a much stronger restriction on the supports that can carry a minimum-weight logical. For a set $T$ of qubits write $\mathbf 1_T$ for its indicator vector and $\sigma(T)=H_X\mathbf 1_T$ for its syndrome.

\begin{lemma}[syndrome connectivity]\label{lem:conn}
Let $z\in K\setminus S$ have weight $d_Z$. Then $\sigma(T)\ne0$ for every nonempty $T\subsetneq\supp(z)$.
\end{lemma}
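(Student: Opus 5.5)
The plan is a proof by contradiction that splits $z$ into two syndrome-free pieces. Suppose some nonempty $T\subsetneq\supp(z)$ has $\sigma(T)=\HX\mathbf 1_T=0$. Set $z_1=\mathbf 1_T$ and $z_2=z+z_1=\mathbf 1_{\supp(z)\setminus T}$. Both are nonzero, since $T$ is nonempty and proper. Both lie in $K=\ker\HX$: for $z_1$ this is the hypothesis $\sigma(T)=0$, and for $z_2$ it follows by linearity, $\HX z_2=\HX z+\HX z_1=0$. Because the supports of $z_1$ and $z_2$ are disjoint and together make up $\supp(z)$, we have $\wt(z_1)+\wt(z_2)=d_Z$, with each term between $1$ and $d_Z-1$.

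Next I would use that $S$ is a subspace. Since $z=z_1+z_2\notin S$, at least one of $z_1,z_2$ is not in $S$; otherwise their sum would be a stabilizer. That piece is then an element of $K\setminus S$ whose weight is at most $d_Z-1$. This contradicts the definition of $d_Z$ in \eqref{eq:dz} as the minimum weight over $K\setminus S$. Hence $\sigma(T)\neq0$ for every such $T$.

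No step here is really hard. The only points needing care are that both pieces have strictly smaller weight, which uses that $T$ is nonempty and proper, and that the argument needs only linearity of $\HX$ together with closure of $S$ under addition. In particular it does not need any structure of $\R$, so it holds verbatim for any CSS code.
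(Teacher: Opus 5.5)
Your proposal is correct and is essentially the paper's own argument: split $z$ into $\mathbf 1_T$ and $z+\mathbf 1_T$, note both are nonzero elements of $K$ of weight below $d_Z$, and use closure of $S$ under addition to conclude that one of them lies in $K\setminus S$, contradicting minimality. Your remark that only linearity of $\HX$ and the subspace property of $S$ are used, so the lemma holds for any CSS code, is also accurate.
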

\begin{proof}
If $\sigma(T)=0$ then $\mathbf 1_T$ and $z+\mathbf 1_T$ both lie in $K$, both are nonzero, and both have weight below $d_Z$. Their sum is $z\notin S$, so at least one of them lies in $K\setminus S$, contradicting the minimality of $z$.
\end{proof}

Let $T\subsetneq\supp(z)$ be nonempty and let $g$ be any check with $\sigma(T)_g=1$. Since $z$ satisfies $g$, the set $\supp(g)\cap(\supp(z)\setminus T)$ is nonempty. Choosing $g$ canonically, as the lowest-index unsatisfied check, therefore leaves at most $\wt(a)+\wt(b)-1$ ways to extend $T$ toward $\supp(z)$. Algorithm~\ref{alg:cluster} explores these extensions from an anchor qubit fixed by translation symmetry. Write $c=\max\{\wt(a),\wt(b)\}$ for the largest number of $X$-checks meeting one qubit.

\begin{algorithm}[!t]
\caption{Anchored cluster search (exact lower bound)}\label{alg:cluster}
\begin{algorithmic}[1]
\State \textbf{Input:} $H_X$, a basis of $S$, radius $W$. \textbf{Output:} a logical of weight $\le W$, or the certificate $d_Z\ge W+1$.
\State \Call{Grow}{$\{q_{\mathrm L}\}$}, where $q_{\mathrm L}$ is the left-block origin
\State \Call{Grow}{$\{q_{\mathrm R}\}$}, where $q_{\mathrm R}$ is the right-block origin, using right-block qubits only
\State \textbf{return} $d_Z\ge W+1$
\Statex
\Function{Grow}{$T$}
  \State $u\gets\wt(\sigma(T))$
  \If{$u=0$}
    \If{$\mathbf 1_T\notin S$} \textbf{stop} and output $\mathbf 1_T$ \EndIf
    \State \textbf{return} \Comment{pruned by Lemma~\ref{lem:conn}}
  \EndIf
  \If{$|T|+\lceil u/c\rceil>W$} \textbf{return} \EndIf
  \State $g\gets$ lowest-index check with $\sigma(T)_g=1$
  \For{each qubit $q\in\supp(g)\setminus T$}
    \State \Call{Grow}{$T\cup\{q\}$}
  \EndFor
\EndFunction
\end{algorithmic}
\end{algorithm}

\begin{theorem}[exactness of the cluster search]\label{thm:cluster}
If Algorithm~\ref{alg:cluster} outputs a vector, it is an element of $K\setminus S$ of weight at most $W$. If it terminates without output, then $d_Z\ge W+1$.
\end{theorem}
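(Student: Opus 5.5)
The soundness half is immediate from the algorithm. \textsc{Grow} outputs $\mathbf 1_T$ only when $\sigma(T)=0$, i.e.\ $\mathbf 1_T\in K$, and $\mathbf 1_T\notin S$. Every call with $|T|=W$ and $u\ne0$ is cut by the pruning test, since $|T|+\lceil u/c\rceil>W$. The only way to recurse is through that test, so every call satisfies $|T|\le W$; in particular every output has weight at most $W$. The real content is completeness: if $d_Z\le W$, some branch must produce an output.

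For completeness I will argue by contradiction. Assume $d_Z\le W$ and fix a minimum-weight $z\in K\setminus S$ with $\wt(z)=d_Z$. By Proposition~\ref{prop:equiv}, translations by $G$ preserve $K$, $S$ and weight, so I may translate $z$ to contain an anchor.
\begin{itemize}
\item If $z$ has a left-block qubit, translate that qubit to the left origin $q_{\mathrm L}$.
\item Otherwise $z$ lies entirely in the right block; translate one of its qubits to $q_{\mathrm R}$.
\end{itemize}
I then track the invariant that the current $T$ is a nonempty subset of $\supp(z)$ containing the anchor. In the right-only case, one point needs checking: the restriction to right-block qubits in the second call never discards the true extension, because $\supp(z)$ is right-only.

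The induction is on $|T|$ along a distinguished path of calls. There are two cases.
\begin{itemize}
\item If $T=\supp(z)$, then $\sigma(T)=0$ and $\mathbf 1_T=z\notin S$, so the algorithm stops with an output. (It may already have stopped earlier with some other output, which is also fine.)
\item If $T\subsetneq\supp(z)$, Lemma~\ref{lem:conn} gives $u=\wt(\sigma(T))>0$, so the $u=0$ pruning cannot fire. For the lowest-index check $g$ with $\sigma(T)_g=1$, the paragraph before the algorithm shows $\supp(g)\cap(\supp(z)\setminus T)\neq\emptyset$. The loop therefore recurses on some $T\cup\{q\}\subseteq\supp(z)$, which preserves the invariant and increases $|T|$.
\end{itemize}
Since $|\supp(z)|$ is finite, this path reaches $T=\supp(z)$, provided no step is cut by the weight pruning.

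The main obstacle is justifying that the bound $|T|+\lceil u/c\rceil>W$ never cuts this path. I would argue as follows. Each qubit of $\supp(z)\setminus T$ meets at most $c$ $X$-checks; this uses that a column of $H_X$ has weight $\wt(a)$ or $\wt(b)$. Every unsatisfied check of $T$ must be hit an odd number of times, hence at least once, by $\supp(z)\setminus T$, because $\sigma(z)=0$ gives $\sigma(\supp(z)\setminus T)=\sigma(T)$. Therefore $|\supp(z)\setminus T|\ge\lceil u/c\rceil$, and so $|T|+\lceil u/c\rceil\le d_Z\le W$ and the path is never pruned. The algorithm thus produces an output whenever $d_Z\le W$. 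Contrapositively, termination without output certifies $d_Z\ge W+1$. Finiteness of the recursion follows because each step strictly enlarges $T$ and $|T|\le W$ throughout.
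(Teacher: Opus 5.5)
Your proof is correct and follows the paper's argument essentially step for step. Soundness comes from the zero-syndrome and $S$-membership tests, and completeness translates a minimum-weight logical onto an anchor, grows a chain of subsets of $\supp(z)$ through the canonical unsatisfied check, uses Lemma~\ref{lem:conn} to rule out the zero-syndrome prune, and uses $|\supp(z)\setminus T|\ge\lceil u/c\rceil$ to rule out the parity prune. Your explicit checks that the right-only restriction never discards the needed extension, and that every call satisfies $|T|\le W$, fill in details the paper leaves implicit.
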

\begin{proof}
Any output has zero syndrome and lies outside $S$, so it is a logical of weight $|T|\le W$. Suppose $d_Z\le W$ and let $z$ be a logical of weight $d_Z$. If $z$ meets the left block, a translation moves one of its left-block qubits to $q_{\mathrm L}$; otherwise $z$ is supported on the right block and a translation moves one of its qubits to $q_{\mathrm R}$. Translations preserve $K$, $S$, and weight, so we may assume $z$ contains the anchor of one of the two calls. Consider the chain of sets $T_1\subsetneq T_2\subsetneq\cdots\subsetneq T_{d_Z}=\supp(z)$ built from the anchor by always adding a qubit of $\supp(z)$ lying in the canonical check $g$; such a qubit exists by the remark before the algorithm, and each $T_i$ with $i<d_Z$ has $\sigma(T_i)\ne0$ by Lemma~\ref{lem:conn}. No $T_i$ with $i<d_Z$ is removed by the zero-syndrome prune. Each qubit of $\supp(z)\setminus T_i$ changes at most $c$ syndrome bits, so $d_Z-|T_i|\ge\lceil u/c\rceil$ and $T_i$ survives the parity prune. The chain therefore reaches $T_{d_Z}$, which is output unless an earlier output occurred.
\end{proof}

Letting the search continue after an output, rather than stopping, lists every logical of weight exactly $d_Z$ that contains an anchor; applying all translations then gives all minimum-weight logicals. For weights above $d_Z$ Lemma~\ref{lem:conn} no longer applies directly, but the following range suffices for the componentwise distances below.

\begin{proposition}[irreducible range]\label{prop:irred}
Let $w_K$ be the minimum weight of a nonzero element of $K$. Every nonzero $z\in K$ with $\wt(z)<2w_K$ satisfies $\sigma(T)\ne0$ for all nonempty $T\subsetneq\supp(z)$. Consequently, Algorithm~\ref{alg:cluster} with $W<2w_K$, with the output step replaced by recording and returning, lists every element of $K\setminus S$ of weight at most $W$ up to translation.
\end{proposition}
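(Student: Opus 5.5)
The first claim follows from the same splitting argument as Lemma~\ref{lem:conn}, with $w_K$ as the threshold in place of $d_Z$. Let $z\in K$ be nonzero with $\wt(z)<2w_K$, and suppose some nonempty $T\subsetneq\supp(z)$ has $\sigma(T)=0$. Then $\mathbf 1_T\in K$. Since $\mathbf 1_T$ and $z$ are both in $K$, so is $z+\mathbf 1_T=\mathbf 1_{\supp(z)\setminus T}$. Both vectors are nonzero because $T$ is nonempty and proper. Each therefore has weight at least $w_K$, and their supports are disjoint with union $\supp(z)$. This gives $\wt(z)=|T|+|\supp(z)\setminus T|\ge 2w_K$, a contradiction. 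Unlike Lemma~\ref{lem:conn}, this argument never needs to decide which piece lies outside $S$; it uses only the minimum weight of $K$.

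For the enumeration claim, I would rerun the proof of Theorem~\ref{thm:cluster} with the irreducibility just proved in place of Lemma~\ref{lem:conn}. Let $z\in K\setminus S$ have $\wt(z)\le W<2w_K$. By translation invariance of $K$, $S$ and weight, I may assume $z$ contains $q_{\mathrm L}$. If $z$ has no left-block qubit, it is entirely in the right block and I may assume it contains $q_{\mathrm R}$; the right-only restricted call then suffices.

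I then build the canonical chain $T_1=\{\text{anchor}\}\subsetneq T_2\subsetneq\cdots\subsetneq T_{\wt(z)}=\supp(z)$. At each step I take the lowest-index check $g$ with $\sigma(T_i)_g=1$ and add a qubit of $\supp(g)\cap(\supp(z)\setminus T_i)$. Such a qubit exists because $z$ satisfies $g$ while $T_i$ violates it. In the right-only call this qubit is automatically a right-block qubit, since $\supp(z)$ lies in the right block.

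Each proper $T_i$ has nonzero syndrome by the first part, so it is not cut by the zero-syndrome branch. The parity bound $|T_i|+\lceil u/c\rceil\le\wt(z)\le W$ holds exactly as before, since each remaining qubit of $z$ flips at most $c$ syndrome bits. So the chain reaches $\supp(z)$. There, with the output step replaced by ``record if $\mathbf 1_T\notin S$, then return,'' $z$ is recorded.

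Conversely, every recorded vector has zero syndrome, lies outside $S$, and has weight at most $W$ by the pruning. The recorded list is therefore exactly the translation representatives, through an anchor, of $K\setminus S$ up to weight $W$.

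The main point to check is that the modified algorithm does not lose any $z$ through early returns. Returns happen only at zero-syndrome sets. Along the chain of $z$ these occur only at $T=\supp(z)$, by the first part, and the recording happens before that return. Other branches of the search are irrelevant, because each recursion branch is independent.
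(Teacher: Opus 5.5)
Your proof is correct and follows the paper's argument. The disjoint split of $z$ into the two nonzero elements $\mathbf 1_T$ and $z+\mathbf 1_T$ of $K$ forces $\wt(z)\ge 2w_K$, and you then rerun the completeness argument of Theorem~\ref{thm:cluster} with this irreducibility in place of Lemma~\ref{lem:conn}. Your added checks, that recorded vectors have weight at most $W$ and that the only zero-syndrome return on the chain of $z$ occurs at $\supp(z)$ after recording, spell out what the paper means by applying the argument ``verbatim.''
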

\begin{proof}
If $\sigma(T)=0$, then $\mathbf 1_T$ and $z+\mathbf 1_T$ are nonzero elements of $K$ whose weights sum to $\wt(z)<2w_K$, so one of them has weight below $w_K$, a contradiction. The completeness argument of Theorem~\ref{thm:cluster} then applies verbatim, and the zero-syndrome prune never removes a proper subset of $\supp(z)$. The value $w_K$ is itself found by Algorithm~\ref{alg:cluster} with the test $\mathbf 1_T\notin S$ dropped, since a nonzero minimum-weight element of $K$ has no proper zero-syndrome subset.
\end{proof}

The decomposition of Section~\ref{sec:decomp} interacts with Lemma~\ref{lem:conn} asymmetrically.

\begin{corollary}[colon connectivity]\label{cor:colonconn}
Let $z\in K$ satisfy $\pi(z)\ne0$ and $\wt(z)=d_{\mathcal C}$. Then $\sigma(T)\ne0$ for every nonempty $T\subsetneq\supp(z)$. Hence Algorithm~\ref{alg:cluster}, with the test $\mathbf 1_T\notin S$ replaced by $\pi(\mathbf 1_T)\ne0$, computes $d_{\mathcal C}$ exactly.
\end{corollary}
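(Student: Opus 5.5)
The plan mirrors the proof of Lemma~\ref{lem:conn}, with the invariant $\pi$ playing the role that membership in $S$ played there.

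\emph{Step 1: $\pi$ is additive.} I would first note that $\pi$ extends to all of $K$ as the linear map $(u,v)\mapsto v+(a)$. By Theorem~\ref{thm:exact} this map takes values in $(a\cln b)/(a)$ and vanishes on $S$. It is additive, since it is the composite of the linear projection onto the right block with the quotient map $\R\to\R/(a)$.

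\emph{Step 2: contradiction.} Suppose some nonempty $T\subsetneq\supp(z)$ has $\sigma(T)=0$. Then $\mathbf 1_T\in K$ and $z+\mathbf 1_T\in K$. Because $T$ is a nonempty proper subset of $\supp(z)$, both vectors are nonzero and both have weight strictly below $\wt(z)=d_{\mathcal C}$. By additivity,
\[
\pi(\mathbf 1_T)+\pi(z+\mathbf 1_T)=\pi(z)\ne0,
\]
so at least one of the two has nonzero $\pi$. That element is a colon-component logical lighter than $d_{\mathcal C}$, which contradicts the definition of $d_{\mathcal C}$. This gives the connectivity statement. It also explains why the argument fails on the annihilator side: there the condition $\pi=0$ together with non-membership in $S$ is not additive in the required way, because two summands with $\pi=0$ can both be stabilizers. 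I would not prove failure for $\mathcal A$ here.

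\emph{Step 3: exactness of the modified search.} I would rerun the proof of Theorem~\ref{thm:cluster} with two changes: the output test becomes $\pi(\mathbf 1_T)\ne0$, and the target $z$ is a minimum-weight colon element.

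\begin{itemize}
\item Translations commute with $\pi$ by Proposition~\ref{prop:equiv}, so we may anchor $z$ at $q_{\mathrm L}$ or $q_{\mathrm R}$ exactly as before.
\item The canonical chain $T_1\subsetneq\cdots\subsetneq\supp(z)$ has nonzero syndrome at every proper stage by Step~2, so the zero-syndrome prune never cuts it.
\item The counting bound $d_{\mathcal C}-|T_i|\ge\lceil u/c\rceil$ is unchanged, so the parity prune is never triggered along the chain.
\end{itemize}

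Hence, running the search with increasing radius $W$, the first output occurs at $W=d_{\mathcal C}$. Any output is a genuine colon element of weight at most $W$, because it has zero syndrome and $\pi\ne0$. This pins down $d_{\mathcal C}$ exactly.

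\emph{Main obstacle.} The delicate point is the bookkeeping of the zero-syndrome prune. When $\sigma(T)=0$ but $\pi(\mathbf 1_T)=0$, the branch is abandoned. I must check that the target chain never reaches such a $T$ before its end, and this is exactly what Step~2 guarantees.

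I would also verify that computing $\pi(\mathbf 1_T)$ is legitimate, namely testing membership of the right-block part of $\mathbf 1_T$ in $(a)=\operatorname{im}A$ by a rank test. This is valid because $\mathbf 1_T\in K$ whenever the test is reached.
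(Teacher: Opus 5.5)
Your proof is correct and is essentially the paper's argument: linearity of $\pi$ on $K$, the split $z=\mathbf 1_T+(z+\mathbf 1_T)$ forcing a lighter colon logical, and then the proof of Theorem~\ref{thm:cluster} rerun using the translation invariance of $\pi$ from Proposition~\ref{prop:equiv}. Your aside on the annihilator side is off, though the proof does not need it: two pieces summing to $z\notin S$ cannot both be stabilizers, and the real obstruction is that both pieces can carry the same nonzero $\pi$, i.e.\ both be colon logicals, which contradicts nothing about $d_{\mathcal A}$.
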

\begin{proof}
The map $(u,v)\mapsto v+(a)$ is linear on $K$. If $\sigma(T)=0$ then $\mathbf 1_T,z+\mathbf 1_T\in K$ are lighter than $z$ and $\pi(\mathbf 1_T)+\pi(z+\mathbf 1_T)=\pi(z)\ne0$, so one of them is a colon logical of weight below $d_{\mathcal C}$. A colon logical is automatically outside $S$, and the remaining steps are those of Theorem~\ref{thm:cluster}; translations preserve $\pi$ by Proposition~\ref{prop:equiv}.
\end{proof}

The annihilator analogue fails: for a minimum-weight $z$ with $\pi(z)=0$, a zero-syndrome split can produce two lighter pieces with $\pi(\mathbf 1_T)=\pi(z+\mathbf 1_T)\ne0$, both colon logicals, which contradicts nothing. We therefore obtain $d_{\mathcal A}$ either from the census at weight $d_Z$ or, when $d_{\mathcal A}>d_Z$, from Proposition~\ref{prop:irred} together with a one-sided witness.

\noindent\textbf{Cost.} Each call branches on at most $\wt(a)+\wt(b)-1$ qubits, so the search visits at most $2\sum_{j=1}^{W}(\wt(a)+\wt(b)-1)^{j-1}$ nodes, which is $O(5^{W})$ for weight-six checks and independent of $n$ apart from the syndrome bookkeeping. Each node updates $c$ syndrome bits and scans for the first unsatisfied check; the membership test in $S$ is run only at zero-syndrome nodes. The parity prune keeps the practical count far below the worst case (Table~\ref{tab:clustercert}). All arithmetic is exact over $\F_2$ and the integers, so the certificate involves no floating-point tolerance.

\begin{algorithm}[!t]
\caption{Search for low-weight logical witnesses (upper bounds)}\label{alg:screen}
\begin{algorithmic}[1]
\State \textbf{Input:} $\ell,m,a,b$; search radius $w_{\max}$. \textbf{Output:} low-weight witnesses with component and shape labels.
\State compute $\ann(a),\ann(b),(a\cln b),(b\cln a)$
\State search $t\in\ann(a)\setminus b\ann(a)$; record $(t,0)$ as an $\mathcal A$ witness
\State search $s\in\ann(b)\setminus a\ann(b)$; record $(0,s)$ and evaluate $\pi$ to label its component
\For{$v\in(a\cln b)\setminus(a)$ with $\wt(v)\le w_{\max}$}
  \State find a low-weight lift $u$ of $au=bv$; record $(u,v)$, $\pi[(u,v)]$, and its weight split
\EndFor
\State repeat the sparse lift search with $a$ and $b$ exchanged
\State \textbf{return} the lightest witnesses found; each gives an upper bound only
\end{algorithmic}
\end{algorithm}

\begin{algorithm}[!t]
\caption{Orbit-reduced support exclusion (lower bound)}\label{alg:supp}
\begin{algorithmic}[1]
\State \textbf{Input:} target $d_0$. \textbf{Output:} $d_Z\ge d_0$, or a logical support.
\For{each anchored support representative $E$ with $|E|<d_0$}
  \State $r\gets\operatorname{rank}\HX[:,E]$;\quad $\dim K(E)\gets|E|-r$
  \If{$\dim K(E)=0$} \textbf{continue} \EndIf
  \State compute $\dim S(E)$
  \If{$\dim K(E)\ne\dim S(E)$} \textbf{return} support $E$ \EndIf
\EndFor
\State \textbf{return} $d_Z\ge d_0$
\end{algorithmic}
\end{algorithm}

\subsection{Complexity of the structural and support steps}\label{sec:complexity}
Write $N=\ell m=n/2$. The vector spaces $\ann(a)$, $b\ann(a)$, $(a\cln b)$, $K$, and $S$ are obtained by Gaussian elimination on $O(N)\times O(N)$ matrices, hence in $O(N^3)$ arithmetic operations. This is only the \emph{structural} part. Extracting a low-weight element from a quotient or finding a minimum-weight lift remains a bounded or exact weight search; it is not covered by the $O(N^3)$ bound.

For support exclusion, anchoring one occupied coordinate at an origin reduces the number of weight-$i$ supports from $\binom ni$ to at most
\begin{equation}
\binom{n-1}{i-1}+\binom{N-1}{i-1},
\end{equation}
for the two block-anchor cases before any further deduplication. Since $\binom{n-1}{i-1}=(i/n)\binom ni$, the leading saving is of order $n/i$ rather than a uniform factor $N$. Rank pruning then removes supports whose restricted check columns are independent. The overall procedure remains exponential in the target distance; the benefit is symmetry and early rejection, not polynomial-time distance computation.

\subsection{Results of the lower-bound computations}\label{sec:milp}
Table~\ref{tab:clustercert} reports the cluster search at radius $W=d_{\mathrm U}-1$ for the six standard codes. Each run terminates without output, so Theorem~\ref{thm:cluster} gives $d_Z\ge d_{\mathrm U}$, and together with the witnesses of Table~\ref{tab:wit} and Proposition~\ref{prop:sym} the distance is determined in every case. The implementation was checked against exhaustive enumeration of $K\setminus S$ on $200$ random BB codes of length at most $32$, with no disagreement.

As an independent cross-check on the smaller codes we also solve the parity-constrained integer program
\begin{align}
\text{minimise}\quad & \mathbf 1^{\mathsf T} z \label{eq:milpobj}\\
\text{subject to}\quad & H_X z-2p=0,\quad Lz-y-2q=0,\quad \mathbf 1^{\mathsf T} y\ge1,\nonumber
\end{align}
with binary $z,y$ and nonnegative integer $p,q$, where $L\in\F_2^{k\times n}$ satisfies $\ker(L|_K)=S$ (for instance, rows given by a basis of $X$-logical operators). The factors of two enforce parity, so $y=Lz\bmod2$ and $\mathbf 1^{\mathsf T}y\ge1$ is exactly $z\notin S$. Solved with SciPy~1.17.0 and HiGHS~1.8.0 \cite{virtanen2020,huangfu2018}, with \texttt{mip\_rel\_gap}=0 and primal, dual, and MIP feasibility tolerances of $10^{-7}$, it returns the optima $4,6,10,10$ for the four shortest codes, in agreement with Table~\ref{tab:clustercert}. Because HiGHS works in floating point, we use this calculation only as a consistency check and base every reported lower bound on Algorithm~\ref{alg:cluster}.

\begin{table}[!t]
\centering\small
\caption{Anchored cluster search (Algorithm~\ref{alg:cluster}) at radius $W=d_{\mathrm U}-1$. ``Nodes'' counts calls of \textsc{Grow} over both anchors; no run produced a logical, so each proves $d\ge W+1$. Node counts are used as the implementation-independent measure of search effort.}
\label{tab:clustercert}
\renewcommand{\arraystretch}{1.18}
\setlength{\tabcolsep}{4pt}
\begin{tabular}{@{}lcrc@{}}
\toprule
code & $W$ & nodes & result \\
\midrule
$\code{18}{4}{4}$    & 3  & $19$                 & $d=4$ \\
$\code{72}{12}{6}$   & 5  & $171$                & $d=6$ \\
$\code{90}{8}{10}$   & 9  & $29{,}301$           & $d=10$ \\
$\code{108}{8}{10}$  & 9  & $24{,}863$           & $d=10$ \\
$\code{144}{12}{12}$ & 11 & $284{,}281$          & $d=12$ \\
$\code{288}{12}{18}$ & 17 & $455{,}504{,}708$    & $d=18$ \\
\bottomrule
\end{tabular}
\end{table}

\subsection{When the bounds meet: exact distance}
An explicit logical of weight $d_0$ together with a lower bound $d_Z\ge d_0$ gives $d_Z=d_0$, and then $d=d_Z$ by Proposition~\ref{prop:sym}. The witnesses of Table~\ref{tab:wit} and the certificates of Table~\ref{tab:clustercert} meet for all six standard codes. For $\code{72}{12}{6}$ and $\code{18}{4}{4}$ the support-exclusion test of Algorithm~\ref{alg:supp} gives the same lower bounds, but its cost grows like $\binom{n}{d_0-1}$ and it does not reach the longer codes.

\section{Examples and results}\label{sec:results}
The polynomial construction reproduces $n$ and $k$ for every standard code. Table~\ref{tab:wit} gives explicit nontrivial logical operators and Table~\ref{tab:clustercert} the matching lower bounds, so the distances $4,6,10,10,12,18$ are proved for all six examples. Figure~\ref{fig:bracket} and Table~\ref{tab:res} summarize these results. For $\code{288}{12}{18}$ this confirms a value reported as a numerical upper bound in \cite{bravyi2024}.

Running Algorithm~\ref{alg:cluster} at radius $d$ without stopping, and closing the output under translation, lists every minimum-weight logical operator. Grouping these operators by logical class and applying Definition~\ref{def:shape} to each class gives Table~\ref{tab:census}. Only $\code{108}{8}{10}$ attains its distance in a single component. In every other example both components attain $d$, and in $\code{90}{8}{10}$, $\code{144}{12}{12}$, and $\code{288}{12}{18}$ one-sided and balanced minimum-weight classes coexist. A colon class with a one-sided minimum occurs already in $\code{72}{12}{6}$, through representatives $(0,s)$ with $s\in\ann(b)\setminus(a)$.

\begin{table}[!t]
\centering\small
\caption{Census of minimum-weight logical classes, from the complete list of weight-$d$ logical operators. Each class is labelled by its component under $\pi$ and by its shape (Definition~\ref{def:shape}, $\tau=3$): one-sided (o), lopsided (l), balanced (b).}
\label{tab:census}
\renewcommand{\arraystretch}{1.18}
\setlength{\tabcolsep}{3.5pt}
\begin{tabular}{@{}lccccccc@{}}
\toprule
 & & \multicolumn{3}{c}{$\mathcal A$} & \multicolumn{3}{c}{$\mathcal C$} \\
\cmidrule(lr){3-5}\cmidrule(l){6-8}
code & classes & o & l & b & o & l & b \\
\midrule
$\code{18}{4}{4}$    & 15  & 0  & 3 & 0  & 0  & 12 & 0   \\
$\code{72}{12}{6}$   & 84  & 36 & 0 & 0  & 36 & 12 & 0   \\
$\code{90}{8}{10}$   & 72  & 9  & 0 & 0  & 9  & 18 & 36  \\
$\code{108}{8}{10}$  & 9   & 0  & 0 & 0  & 0  & 0  & 9   \\
$\code{144}{12}{12}$ & 246 & 36 & 0 & 12 & 0  & 72 & 126 \\
$\code{288}{12}{18}$ & 84  & 36 & 0 & 0  & 36 & 0  & 12  \\
\bottomrule
\end{tabular}
\end{table}

\begin{figure}[!t]
\centering
\begin{tikzpicture}[yscale=0.30,font=\scriptsize]
 \draw[->] (0,0)--(0,20) node[above=-1pt]{$d$};
 \foreach \d in {4,6,8,10,12,14,16,18}{\draw (-0.12,\d)--(0.12,\d); \node[left,inner sep=1pt] at (-0.16,\d){\d};}
 \foreach \x/\lab/\dL/\dU in {%
   1/{$\code{18}{4}{4}$}/4/4,
   2/{$\code{72}{12}{6}$}/6/6,
   3/{$\code{90}{8}{10}$}/10/10,
   4/{$\code{108}{8}{10}$}/10/10,
   5/{$\code{144}{12}{12}$}/12/12,
   6/{$\code{288}{12}{18}$}/18/18}{
   \draw[line width=3.5pt,gray!50] (\x,\dL)--(\x,\dU);
   \draw[line width=1.1pt] (\x-0.17,\dL)--(\x+0.17,\dL);
   \fill (\x,\dU) circle (2.3pt);
   \node[rotate=42,anchor=east,inner sep=1pt] at (\x,-0.5){\lab};
 }
\end{tikzpicture}
\caption{Distances of the six standard BB examples. In every case the explicit upper-bound witness meets the exact lower bound from the anchored cluster search.}
\label{fig:bracket}
\end{figure}

For the Gross code $\code{144}{12}{12}$ the annihilator screen returns the one-sided logical $u=(1+x^6)(1+y^2)a$ of weight $12$, and the cluster search at radius $11$ visits $284{,}281$ nodes without finding a lighter logical, so $d=12$. Its $246$ minimum-weight classes comprise one-sided and balanced annihilator classes together with lopsided and balanced colon classes; the one-sided witness is therefore representative of only a small part of the minimum-weight structure.

\begin{table*}[!t]
\centering\small
\caption{Results at a glance. $d_{\mathrm U}$ is the weight of an explicit logical, $d_{\mathrm L}$ the lower bound of Table~\ref{tab:clustercert}. Components attaining $d$ are taken from the census of Table~\ref{tab:census}. Published values are for comparison only; for the length-$288$ code Ref.~\cite{bravyi2024} reports $18$ as a numerical upper bound.}
\label{tab:res}
\renewcommand{\arraystretch}{1.2}
\begin{tabular}{@{}lccccccl@{}}
\toprule
code & $n$ & $k$ & components attaining $d$ & $d_{\mathrm U}$ & $d_{\mathrm L}$ & published & status \\
\midrule
$\code{18}{4}{4}$    & 18  & 4  & $\mathcal A,\mathcal C$ & 4  & 4  & 4       & proved \\
$\code{72}{12}{6}$   & 72  & 12 & $\mathcal A,\mathcal C$ & 6  & 6  & 6       & proved \\
$\code{90}{8}{10}$   & 90  & 8  & $\mathcal A,\mathcal C$ & 10 & 10 & 10      & proved \\
$\code{108}{8}{10}$  & 108 & 8  & $\mathcal C$ only       & 10 & 10 & 10      & proved \\
$\code{144}{12}{12}$ & 144 & 12 & $\mathcal A,\mathcal C$ & 12 & 12 & 12      & proved \\
$\code{288}{12}{18}$ & 288 & 12 & $\mathcal A,\mathcal C$ & 18 & 18 & $\le18$ & proved \\
\bottomrule
\end{tabular}
\end{table*}

\subsection{Decomposition profile}\label{sec:profile}
The structural dimensions require only Gaussian elimination; $w_{\ann}$ is an exhaustive minimum over $\ann(a)\setminus b\ann(a)$, and $d_{\mathcal A},d_{\mathcal C}$ are the exact component minima of Theorem~\ref{thm:ddec}, obtained from the census and, for $\code{108}{8}{10}$, from Proposition~\ref{prop:irred} with $w_K=6$ and radius $11$. The length-$288$ value $w_{\ann}=18$ was independently checked by exhaustive enumeration of all $2^{24}$ elements of $\ann(a)$.

\begin{table*}[!t]
\centering\small
\caption{Structural dimensions and exact component minima. Lemma~\ref{lem:frob} guarantees $r_{\mathcal A}=r_{\mathcal C}=k/2$ in every row, and $d=\min(d_{\mathcal A},d_{\mathcal C})$ by Theorem~\ref{thm:ddec}.}
\label{tab:profile}
\renewcommand{\arraystretch}{1.18}
\begin{tabular}{@{}lcccccccc@{}}
\toprule
code & $\dim\ann(a)$ & $\dim b\ann(a)$ & $r_{\mathcal A}$ & $r_{\mathcal C}$ & $w_{\ann}$ & $d_{\mathcal A}$ & $d_{\mathcal C}$ & $d$ \\
\midrule
$\code{18}{4}{4}$    & 2  & 0  & 2 & 2 & 6  & 4  & 4  & 4  \\
$\code{72}{12}{6}$   & 12 & 6  & 6 & 6 & 6  & 6  & 6  & 6  \\
$\code{90}{8}{10}$   & 6  & 2  & 4 & 4 & 10 & 10 & 10 & 10 \\
$\code{108}{8}{10}$  & 6  & 2  & 4 & 4 & 12 & 12 & 10 & 10 \\
$\code{144}{12}{12}$ & 12 & 6  & 6 & 6 & 12 & 12 & 12 & 12 \\
$\code{288}{12}{18}$ & 24 & 18 & 6 & 6 & 18 & 18 & 18 & 18 \\
\bottomrule
\end{tabular}
\end{table*}

\begin{remark}[naive bounds are weak or false]\label{rem:naive}
Two natural bounds are worth recording only as cautions, since the decomposition explains both failures. Mapping a syzygy to $w=au=bv\in(a)\cap(b)$ and bounding $\wt(u)+\wt(v)$ from $\wt(w)$ is valid but lossy: it is blind to the one-sided family, where $w=0$, and the convolution step loses a constant factor, returning $3$ against $d=6$ on $\code{72}{12}{6}$. A difference-set argument, which would assert $\wt(u)\wt(v)\ge|\supp(a)-\supp(b)|$, is in fact false over $\R$: for $\code{18}{4}{4}$ with $b=a^2$ it predicts weight at least $5$, whereas $(a,1)$ is a genuine logical of weight $4$. Both fail on support shapes that the quotient decomposition helps expose, but those shapes are not themselves the two components of Theorem~\ref{thm:exact}.
\end{remark}

\subsection{Screening on random pairs}\label{sec:sweep}
To show the screens behave as a filter rather than a post-hoc explanation, we sampled random weight-three pairs $(a,b)$ on two tori. The setup is as follows: each of $a,b$ is an independent uniform weight-three polynomial (distinct monomials within a polynomial, but $a$ and $b$ may share monomials); pairs are not deduplicated under translation or block swap; disconnected cases are not excluded; the seeds are fixed ($3$ for $(6,6)$, $5$ for $(9,6)$) for reproducibility. We keep the pairs with $k>0$ and report, for the one-sided annihilator screen, the fraction with a witness of weight at most $6$ and at most $12$, and the fraction exhibiting a sparse colon shortcut $v\in(a\cln b)\setminus(a)$ with $\wt(v)\le2$ (a lopsided relation). Results are in Table~\ref{tab:sweep}. The dimension is concentrated at small values: at $(6,6)$ the $k>0$ pairs split as $k=4$ ($28$), $k=8$ ($13$), $k=24$ ($1$); at $(9,6)$ as $k=4$ ($36$), $k=8$ ($3$). The two screens trade off with the torus: at $(6,6)$ the one-sided screen catches many pairs, while at $(9,6)$ it rarely fires below weight $6$ but the lopsided-shortcut screen catches most. The marginal rates show that each diagnostic fires often on these unconstrained samples. Because Table~\ref{tab:sweep} does not report the intersections of the diagnostics, it does not establish the fraction of pairs surviving all screens and we make no rarity claim for the curated standard codes. We present the counts only as a fixed-seed demonstration of screening behaviour; the entries are exact for the stated seeds. These random pairs are not intended to model optimized BB-code searches; they only show that the screens identify many early failure modes in an unconstrained candidate pool.

\begin{table}[!t]
\centering\small
\caption{Screening on $200$ random weight-three pairs per torus. ``$\le6$''/``$\le12$'' give the fraction of $k>0$ pairs whose one-sided annihilator screen returned a witness of that weight; ``lop'' is the fraction with a sparse colon shortcut.}
\label{tab:sweep}
\renewcommand{\arraystretch}{1.2}
\setlength{\tabcolsep}{4pt}
\begin{tabular}{@{}lccccc@{}}
\toprule
torus $(\ell,m)$ & samples & $k{>}0$ & $\le6$ & $\le12$ & lop \\
\midrule
$(6,6)$ & 200 & 42 & $45\%$ & $71\%$ & $57\%$ \\
$(9,6)$ & 200 & 39 & $3\%$  & $28\%$ & $74\%$ \\
\bottomrule
\end{tabular}
\end{table}

\subsection{Search and design procedure}\label{sec:workflow}
Put together, the structural decomposition and the support-shape diagnostics give a concrete recipe. Work proceeds from cheap tests to expensive ones, and the output is a pair of bounds with an explicit label rather than a single unqualified number. Given $(\ell,m,a,b)$:
\begin{enumerate}[leftmargin=1.5em,itemsep=1pt]
\item \emph{Set up.} Build $A,B$ over $\F_2$ and compute $k$, together with $\ker A$, $B(\ker A)$, $\operatorname{im}A$, and $B^{-1}(\operatorname{im}A)$; these give $r_{\mathcal A}$ and $r_{\mathcal C}$ (Table~\ref{tab:profile}).
\item \emph{Reject sparse shortcuts.} Test for sparse $r,s$ with $ar=bs$; reject if $\wt(r)+\wt(s)<d_{\text{target}}$. This catches lopsided failures such as $b=a^2$.
\item \emph{Screen annihilators.} Compute the spaces $\ann(a)$, $b\,\ann(a)$ by Gaussian elimination and search for a low-weight representative $w_{\ann}^{\mathrm{found}}$; reject candidates whose $w_{\ann}^{\mathrm{found}}$ is already below the design target.
\item \emph{Sparse relation screen.} Search sparse representatives in $(a\cln b)/(a)$ and $(b\cln a)/(b)$ and lift them; record the component label and weight split of every witness found.
\item \emph{Prove the lower bound.} Run the anchored cluster search (Algorithm~\ref{alg:cluster}) at radius $d_{\text{target}}-1$ on the competitive candidates; for $d_{\mathcal C}$ alone use the colon version of Corollary~\ref{cor:colonconn}.
\item \emph{Report the result.} Record $d_{\mathrm L}\le d\le d_{\mathrm U}$ together with the component and shape data. When $d_{\mathrm L}=d_{\mathrm U}$ the distance is established; otherwise continue the exact search or report the remaining interval.
\end{enumerate}

\begin{figure}[!t]
\centering
\resizebox{0.98\columnwidth}{!}{%
\begin{tikzpicture}[font=\footnotesize,
  b/.style={draw,rounded corners,align=center,inner sep=3pt,minimum height=7mm,minimum width=22mm},
  up/.style={b,fill=blue!8,draw=blue!45},
  lo/.style={b,fill=red!8,draw=red!45},
  res/.style={b,fill=black!6},
  ar/.style={-{Latex},gray!75}]
 \node[b] (spec) at (0,0) {code specification $(\ell,m,a,b)$};
 \node[b] (mats) at (0,-1.25) {build $A,B$ and $K/S$};
 \node[b] (comp) at (-2.4,-2.7) {component labels\\$\mathcal A$ / $\mathcal C$};
 \node[b] (shape) at (2.4,-2.7) {shape search\\one-sided / lopsided / balanced};
 \node[b] (cand) at (0,-4.25) {explicit low-weight witnesses};
 \node[up] (upb) at (-2.5,-5.8) {witness\\$d\le d_{\mathrm U}$};
 \node[lo] (lob) at (2.5,-5.8) {anchored cluster search\\excludes lighter logicals\\$d\ge d_{\mathrm L}$};
 \node[res] (out) at (0,-7.35) {$d_{\mathrm L}\le d\le d_{\mathrm U}$};
 \node[res] (pv) at (-2.5,-8.65) {$d_{\mathrm L}=d_{\mathrm U}$:\\ distance established};
 \node[res] (bk) at (2.5,-8.65) {$d_{\mathrm L}<d_{\mathrm U}$:\\ continue search};
 \draw[ar] (spec)--(mats);
 \draw[ar] (mats)--(comp);
 \draw[ar] (mats)--(shape);
 \draw[ar] (comp)--(cand);
 \draw[ar] (shape)--(cand);
 \draw[ar] (cand)--(upb);
 \draw[ar] (cand)--(lob);
 \draw[ar] (upb)--(out);
 \draw[ar] (lob)--(out);
 \draw[ar] (out)--(pv);
 \draw[ar] (out)--(bk);
\end{tikzpicture}}
\caption{Workflow from code specification to distance bounds. Algebraic component and representative shape are evaluated separately; explicit witnesses provide upper bounds, while the anchored cluster search provides the lower bound.}
\label{fig:pipeline}
\end{figure}

Fig.~\ref{fig:pipeline} shows the whole route from a code specification to its bounds.
The same classification also yields design rules: avoid $b=a^{2^t}$ and other sparse Frobenius relations; avoid $a$ or $b$ with low-weight annihilators; penalize candidate pairs with sparse colon representatives; and prefer pairs whose cluster-search lower bound $d_{\mathrm L}$ is close to the best witness $d_{\mathrm U}$.

\section{Discussion and limitations}\label{sec:disc}
Theorem~\ref{thm:exact} splits logical classes into two projection-defined components, and Theorem~\ref{thm:ddec} turns the split into a componentwise distance formula. Neither yields a closed-form number, and the census shows why none should be expected from component labels alone: the minimum is attained in both components for five of the six examples, with one-sided, lopsided, and balanced classes coexisting. What the decomposition does provide is a second invariant alongside weight, and Corollary~\ref{cor:colonconn} shows that this invariant is compatible with cluster verification on the colon side.

\subsection{Results and open problems}\label{sec:disc-open}
All six standard distances are proved here, and the census determines $d_{\mathcal A}$ and $d_{\mathcal C}$ separately. The worst-case search grows exponentially with the target distance. For the examples studied here, the parity and translation reductions keep the observed node counts far below the naive $5^d$ growth, including $455{,}504{,}708$ nodes for the length-$288$ code at radius $17$. Extending the method to substantially larger distances will likely require additional symmetry reduction, for example canonicalisation under a larger automorphism group. Two questions remain open. The first is an annihilator analogue of Corollary~\ref{cor:colonconn}, which would give $d_{\mathcal A}$ directly without passing through the full census. The second is whether the counts in Table~\ref{tab:census}, for instance the $36$ one-sided annihilator classes appearing in three of the codes, can be predicted from the structure of $\ann(a)$ and $(a\cln b)$; classical defining-set and apparent-distance tools \cite{bernal2016,macwilliams1977} are natural candidates for the one-sided part.

\subsection{Scope and implementation}\label{sec:scope}
The quantity computed here is the minimum Hamming weight of a nontrivial logical operator. That is a property of the code, not a prediction of performance: it should not be read as a logical error rate under a particular decoder, a biased or spatially correlated noise model, measurement error, or full circuit-level noise. Those depend on the decoder and the noise, and on many operators rather than the lightest one.

With that scope fixed, three consequences are worth stating. A weight-$d$ logical operator is an explicit minimum-weight undetectable error pattern, so the operators in Appendix~\ref{app:wit} name the specific qubit sets a code is least able to protect. The same pair $a=x^3+y+y^2$, $b=y^3+x+x^2$ defines $\code{72}{12}{6}$ on the $6\times6$ torus and $\code{108}{8}{10}$ on the $9\times6$ torus, yet the first has one-sided minima in both components while every minimum of the second is a balanced colon class. The minimum-weight structure is therefore a property of the pair together with the torus, and the census, not the polynomials alone, is what exposes it. The structural spaces used by the screens are obtained in $O(n^3)$ time, after which bounded low-weight searches can reject many weak candidates before a decoder simulation is run; the bounded searches themselves have the additional combinatorial cost stated in Section~\ref{sec:complexity}. We do not claim the classification predicts logical error rates; establishing that would require the simulations this paper does not do.

\subsection{Beyond distance: further applications}\label{sec:disc-apps}
The decomposition is not specifically related to the distance problem and code screening, and we expect it to be useful beyond the two applications developed here. Because it is stated for any relation $au+bv=0$ over a group algebra, the same exact sequence applies to any two-block group algebra code, not only the BB family, and to other CSS constructions whose checks are defined by a pair of ring elements. The decomposition also gives explicit logical subspaces and quotient coordinates beyond a single distance value. Combined with a separate support-shape analysis, these data may help identify logical representatives suited to hardware-efficient operations, complementing explicit gate constructions such as those of Eberhardt and Steffan \cite{eberhardt2024}. The same origin of the lightest operators could also inform decoder design, for instance by weighting a decoder toward the structurally cheaper failure mode identified by the profile of Section~\ref{sec:profile}, or guide large-scale code search, where the polynomial-time structural stage can be followed by bounded sparse searches before an exact minimum-weight calculation is attempted. None of these directions is developed in this paper; we flag them as natural next steps.

\appendices
\section{Software and Data}\label{sec:soft}
Code and data supporting the results in this paper are available at \url{https://github.com/mohammad-rowshan/BB-Codes-Distance-Logical-Decomposition}. 

\section{Code data and explicit witnesses}\label{app:wit}
Table~\ref{tab:codedata} gives the defining polynomials and a verified distance witness for each standard code. Every pair satisfies $au+bv=0$ over $\F_2$ and is a non-stabilizer element of $K\setminus S$, confirmed by exact $\F_2$ computation. The table lists convenient representatives; component membership is determined separately by $\pi[(u,v)]=v+(a)$.

\begin{table}[!t]
\centering\scriptsize
\caption{Code data and convenient minimum-weight witnesses.}
\label{tab:codedata}
\renewcommand{\arraystretch}{1.2}
\setlength{\tabcolsep}{3pt}
\begin{tabular}{@{}>{\raggedright\arraybackslash}p{1.50cm}>{\raggedright\arraybackslash}p{2.55cm}>{\raggedright\arraybackslash}p{3.10cm}@{}}
\toprule
code $(\ell,m)$ & $a,\ b$ & witness $(u,v)$ \\
\midrule
$\code{18}{4}{4}$ $(3,3)$ & $a=x+y+xy$, $b=a^2$ & $u=a,\ v=1$ \\
$\code{72}{12}{6}$ $(6,6)$ & $a=x^3+y+y^2$, $b=y^3+x+x^2$ & $u=(1+y^2)a,\ v=0$ \\
$\code{90}{8}{10}$ $(15,3)$ & $a=x^9+y+y^2$, $b=1+x^2+x^7$ & $u=(1{+}y)\sum_{j=0}^{4}x^{3j},\ v=0$ \\
$\code{108}{8}{10}$ $(9,6)$ & $a=x^3+y+y^2$, $b=y^3+x+x^2$ & see \eqref{eq:w108} \\
\shortstack[l]{$\code{144}{12}{12}$\\$(12,6)$} & $a=x^3+y+y^2$, $b=y^3+x+x^2$ & $u=(1{+}x^6)(1{+}y^2)a,\ v=0$ \\
\shortstack[l]{$\code{288}{12}{18}$\\$(12,12)$} & $a=x^3+y^2+y^7$, $b=y^3+x+x^2$ & see~\eqref{eq:w288}, $v=0$ \\
\bottomrule
\end{tabular}
\end{table}

The displayed minimum witness for $\code{108}{8}{10}$ is the balanced case among the six examples, so we give it in full. On $\ell=9,m=6$ with $a=x^3+y+y^2$, $b=y^3+x+x^2$,
\begin{align}
u&=x+xy+x^4y^4+x^5+x^7y^3+x^8y^4,\nonumber\\
v&=xy^4+x^3y+x^4y^2+x^6y^5.\label{eq:w108}
\end{align}
Its class is one of the nine minimum-weight classes of the code, all of which are $3$-balanced colon classes (Example~\ref{ex:108} and Table~\ref{tab:census}).

The $\code{288}{12}{18}$ one-sided witness (weight $18$, $v=0$) is
\begin{align}
u\;=&\;y+y^3+y^4+y^6+y^8+y^9+y^{10}+y^{11}\nonumber\\
&+x^3y^2+x^3y^4+x^3y^6+x^3y^8\nonumber\\
&+x^6y^2+x^6y^6+x^6y^7+x^6y^{11}+x^9+x^9y^4,\label{eq:w288}
\end{align}
which is verified by $Au=0$ over $\F_2$, $u\notin b\,\ann(a)$ (so the class is a genuine logical, not a stabiliser), and $\wt(u)=18$.

\section*{Acknowledgment}
This research work was conducted in part at the University of New South Wales (UNSW) Sydney.

\bibliographystyle{IEEEtran}
\bibliography{refs}
\end{document}